\newcommand{\abs}[1]{\left| #1 \right|}
\newtheorem{theorem}{Theorem}
\newtheorem{lemma}{Lemma}
\newcommand{\lb}{\left(}
\newcommand{\rb}{\right)}
\newcommand{\romn}{\rule{0em}{0.9em}}
\newcommand{\req}[1]{(\ref{#1.eq})} 
\newcommand{\be}{\begin{equation}}
\newcommand{\ee}{\end{equation}}
\newcommand{\R}{\mathbb{R}}
\newcommand{\sT}{{\scriptstyle T}}
\newcommand{\lcb}{\left\{}
\newcommand{\rcb}{\right\}}
\newcommand{\bone}{\mathbbm{1}}
\newcommand{\sOmega}{{\scriptstyle \Omega}}
\newcommand{\rmn}{{\rm n}}
\newcommand{\hstm}{\hspace{2em}}
\title{Parametric Resonance in Networked Oscillators}
\author{Karthik Chikmagalur and Bassam Bamieh}
\date{}
\NewCommandCopy{\oldtoc}{\tableofcontents}
\renewcommand{\tableofcontents}{\begingroup\hypersetup{hidelinks}\oldtoc\endgroup}
\begin{document}

\maketitle
\begin{abstract}
We investigate parametric resonance in oscillator networks subjected to periodically 
time-varying oscillations in the edge strengths. Such models are inspired by the well-known 
parametric resonance phenomena for single oscillators, as well as the potential rich 
phenomenology when such parametric excitations are present in a variety of applications 
like deep brain stimulation, AC power transmission networks, as well as vehicular flocking 
formations. We consider cases where a single edge, a subgraph, or the entire network is subjected
to forcing, and in each case, we characterize an interesting interplay between the parametric resonance 
modes and the eigenvalues/vectors of the graph Laplacian. Our analysis is based on a novel treatment 
of multiple-scale perturbation analysis that we develop for the underlying high-dimensional 
dynamic equations. 
\end{abstract}

\section{Introduction and Problem Formulation}
\label{sec:problem-statement}

Parametric resonance is a ubiquitous phenomenon occurring at a wide range of scales, 
from undesirable roll-motion of large ships~\cite{fossen2011parametric}, to actuated  
 micro-cantilever systems~\cite{moran2013review} where parametric resonance is used for 
 high sensitivity mass measurements. The distinction between this
phenomenon and  the more familiar harmonic resonance is that the periodically 
oscillating signal is a parameter in the system's dynamics, rather than an additive input as 
is the case with harmonic resonance. 

The simplest mathematical model of parametric resonance is the Mathieu or Hill 
equation\footnote{When $f$ is a pure sinusoid, e.g. $f(t)=a \cos(\omega t)$, it is referred to as 
	 the Mathieu equation, while it is termed Hill's equation when $f$ has higher harmonics representing 
	 a more general periodic signal. The qualitative behavior is similar in either case.}~\cite{nayfeh1995,magnus2004Hill}
      \begin{align} 
            	\ddot{x}(t) + \lb k + f(t) \romn \rb x(t) &= 0 						   \label{Mathieu1.eq}	\\
            	\Leftrightarrow~~~ 
            	\ddot{x}(t) +  k \, x(t)  & = - f(t) \, x(t)							 \label{Mathieu2.eq}
       \end{align}
where $f$ is a periodic signal which we refer to as the ``parametric excitation''. 
When $k>0$, \cref{Mathieu1.eq} can be thought of as a Mass-Spring system with a periodically time-varying 
spring constant $k+f(t)$. Alternatively, the parametric excitation term $f(t) \, x(t)$ in~\cref{Mathieu2.eq} can be thought 
of as an applied periodic force on the Mass-Spring system which is modulated (in time) by the system's 
state $x$. The unexcited system (i.e. with $f(t)=0$) has a natural frequency of $\sqrt{k}$, and when the 
frequency of the periodic, parametric excitation $f$ is tuned to certain fractions of this natural frequency, 
i.e.  at $\sqrt{k}/n, ~n=\tfrac{1}{2},1,2,3,\ldots$, the system 
above exhibits an exponentially growing instability for arbitrarily small amplitudes of $f$. Thus in 
contrast to harmonic resonance, parametric resonance occurs at several  frequencies, and the resulting 
trajectories grow exponentially rather than linearly. In many models, there are nonlinearities 
in~\cref{Mathieu1.eq} (e.g. stiffening springs) that serve mainly to saturate those exponentially growing oscillations 
into stable limit cycles. 

When $k<0$ in~\cref{Mathieu1.eq}, this system can be considered as the linearization of the 
Kapitza pendulum~\cite{Berg_2015} undergoing vibrational stabilization. In this case, without 
 parametric excitation (i.e. $f(t)=0$), the system is unstable. For certain ranges
of amplitudes and frequencies of $f$, the system can be stabilized. This is stabilization without 
sensor feedback, commonly known as vibrational stabilization. A compelling argument
can be made~\cite{Berg_2015} that vibrational stabilization and parametric resonance are 
two different manifestations of one underlying phenomenon, namely that with parametric, 
periodic excitation, the stability of certain systems can be ``flipped'', i.e. from stable to 
unstable in the parametric resonance case, and from unstable to stable in the vibrational 
stabilization case. We note here that since  vibrational stabilization has historically been 
studied using averaging methods~\cite{bullo2002averaging,meerkov1980principle}, it is 
sometimes incorrectly assumed that it requires the use of high excitation frequencies. In fact, 
it can occur at any frequency, but the range of stabilizing amplitudes (of $f$) becoming smaller
as the frequency is reduced. One of the compelling aspects of vibrational control (as well as 
parametric resonance) is that it is a form of {\em sensorless control} which does not require 
feedback in order to change a system's stability properties, and may therefore be useful in 
setting where sensing is either unavailable or difficult to implement. 

The starting point of the present work is the fascinating observation made 
in~\cite{qin2022Vibrational,histed2009direct} regarding possible mechanisms behind 
the clinical  technique of Deep Brain Stimulation (DBS). Roughly speaking, 
the observation is that DBS essentially modulates the connection strengths between 
axons and dendrites, and that DBS is plausibly a ``network effect'' in an neural network 
whose edges have periodically varying strengths.  This is then formalized mathematically~\cite{qin2022Vibrational,nobili2023vibrational} as a
 vibrational control (through periodic edge-strength modulation) problem of an oscillator network. 
 
 In this paper, we take an alternative approach based on studying parametric resonance, rather than 
 vibrational stabilization in oscillator networks. The basic model we adapt is for a network of undamped, 
 second-order linear oscillators of the following form 

\begin{align}
\label{eq:linearized-swing-dynamics}
\ddot{\phi}(t) + L\, \phi(t) = 0, 
\end{align}
where $L$ is an $n\times n$  graph Laplacian, and 
$\phi\in\R^n$ is a vector of phases where the $i^{\text{th}}$ component is the phase of the $i^{\text{th}}$
oscillator. 

Examples of such models are the linearized ``swing dynamics'' of AC power 
networks~\cite{dorfler2014synchronization}, 2nd order models of flocking in vehicular formations~\cite{bamieh2008effect}, 
and classical models of molecular vibrations~\cite{chalopin2019universality}. A useful 
physical analogy~\cite{dorfler2014synchronization} 
for the model~(\ref{eq:linearized-swing-dynamics}) is that the $ij^{\text{th}}$ entry of the Laplacian $L$ is the coefficient of 
a spring-force coupling between the phases $\phi_i$ and $\phi_j$. 

We study parametric resonance in~(\ref{eq:linearized-swing-dynamics}) by assuming that edge strengths (namely the entries of the Laplacian $L$) are perturbed periodically at a given frequency.   More precisely,  the phase dynamics are given by
\begin{align}
\label{eq:linearized-swing-dynamics-perturbed}
\ddot{\phi} + \left( L + \epsilon \,  P  \, \cos(\omega t) \romn \right) \phi = 0,
\end{align}
where \(P\) is the graph Laplacian of a subnetwork, and \(\epsilon\) and \(\omega\) are the strength and frequency of the edge weight forcing respectively. This subnetwork $P$ can be a single edge, a proper subnetwork,  or the full network itself. 
\cref{eq:linearized-swing-dynamics-perturbed} is the higher-order analogue of Mathieu's equation~\cref{Mathieu1.eq}.  By generalizing the periodic function from a one-harmonic sinusoid to a general $\sT$-periodic function, we can also obtain the higher-order analogue of Hill's
\begin{align}
\label{eq:linearized-swing-dynamics-perturbed-as-hill-ode}
\ddot{\phi} + \left( L + \epsilon \,  P \,  f(t) \romn \right) \phi = 0, \quad f(t + \sT) = f(t). 
\end{align}
In either case, an important assumption we make in this paper is that the  the subnetwork \(P\) is perturbed at a \emph{single} frequency, or by a single scalar function \(f(t)\).

The following two extreme cases serve as a motivation for our analysis. \begin{itemize}
\item \(P = L\), the network itself.  This is the ``global'' case, where the entire network is acted on by the same parametric
	excitation. This is perhaps the case in DBS where all neural connection are subject to the same external electromagnetic
	field. 
\item \(P = E_{ik} := e_{ik} e_{ik}^{\star}\), where \(e_{ik}\) is the vector with \(1\) and \(-1\) in the \(i^{\text{th}}\) and \(k^{\text{th}}\) positions and \(0\) elsewhere.  It is the Laplacian of an \(n\)-node graph with a single edge from node \(i \to k\).  This is the case of a  ``single line'' perturbation as what might happen in an AC power transmission network where a single line 
is perturbed by say a thunderstorm. 
\end{itemize}

We believe this problem formulation is novel and maybe of interest in the application areas mentioned. In the present paper, 
we are interested in the following basic questions about~\cref{eq:linearized-swing-dynamics-perturbed}.  
\begin{itemize}
\item Does the system experience parametric resonance (instability) in the limit \(\epsilon \to 0\)?
\item How are the destabilizing (``critical'') frequencies of the external excitation  related to the graph structure, \(L\)?
\item Which parametric resonance modes are the most robust or the easiest to excite?
\item How does the choice of \(P\) affect the stability of the system?  Is perturbing certain edge weights more likely to cause parametric resonance than others?
\end{itemize}

This paper is organized as follows.
We cover the special, simpler case of "global" or full-network forcing (\(P = L\)) in \cref{sec:full-network-parametric-forcing}. The more challenging single-line case  (\(P = E_{ik}\)) is treated in \cref{sec:single-edge-parametric-forcing} using a vector-valued generalization of multiple-scale perturbation analysis developed in the appendices.  
In \cref{sec:generalization-partial-network-forcing} we extend the stability results obtained for single line perturbations to the more general case of parametric forcing of an arbitrary subset of network edges.  \cref{sec:controllability-interpretation} provides an alternate interpretation of the stability results from the perspective of system controllability.  \cref{sec:ring-network-susceptibility-analysis} applies the main stability result to the case of periodic edge perturbations of ring networks.  The simple topology of rings and periodic lattices (tori) allows us to investigate the effect of network size and connectivity on its susceptibility to parametric resonance.  Finally, \cref{sec:concl-future-work} summarizes our findings and concludes with open questions and possible generalizations of the method.

The next two subsections  gather    preliminaries on   graph Laplacians in   \cref{sec:laplacian-properties}  used in  subsequent analysis, as well as the basic stability properties of the Mathieu equation in \cref{sec:Mathieu-properties}.

\subsection{Properties of Graph Laplacians}
\label{sec:laplacian-properties}

We briefly review here basic  properties of  graph Laplacians that we use in our analysis. We consider only 
undirected graphs, and thus $L$ or the subgraph Laplacian $P$ are real symmetric, positive semi-definite 
with at least one eigenvalue at $0$ with eigenvector \(\bone\), i.e. $L \bone = 0 \bone$.

The other eigenvalues give the natural frequencies  of oscillation of this system as the square roots of the 
eigenvalues
\begin{align*}
L \,  \mathbf{v}_k & = {\scriptstyle \Omega}_k^2 \, \mathbf{v}_k, \quad  {\scriptstyle \Omega}_{k} \in \mathbb{R} \\
{\scriptstyle \Omega}_k & \ne {\scriptstyle \Omega}_m \implies \mathbf{v}_k^{*} \, \mathbf{v}_m = 0,
\end{align*}   
and the mutual orthogonality of the eigenvectors is due to the symmetry of the Laplacians. 

Also of interest to us is the Hamiltonian matrix \(\mathcal{L}\) that acts as the generator of system \eqref{eq:linearized-swing-dynamics}
\begin{align}
  \mathcal{L} := \begin{bmatrix} 0 & I \\ \text{-}L & 0 \end{bmatrix}
\end{align}
The non-zero eigenvalues and eigenvectors of \(\mathcal{L}\) are related to those of \(L\) by 
\begin{align}
L \, \mathbf{v} = {\scriptstyle \Omega}^2 \, \mathbf{v} \implies \mathcal{L} \begin{bmatrix} \mathbf{v} \\ \pm j {\scriptstyle \Omega} \mathbf{v} \end{bmatrix} = \pm j {\scriptstyle \Omega} \begin{bmatrix} \mathbf{v}  \\ \pm j {\scriptstyle \Omega} \mathbf{v} \end{bmatrix} \label{eq:graph-laplacian-as-generator}
\end{align}
Since \(L\) has non-negative real eigenvalues, all eigenvalues of \(\mathcal{L}\) are on the imaginary axis.  Further, each zero eigenvalue of \(L\) corresponds to a Jordan block of \(\mathcal{L}\) of size \(2\) with eigenvalue zero.  For a network with one connected component, the corresponding eigenvector is $\bone$, the vector of \(1\)s.

As a result, the modes of \cref{eq:linearized-swing-dynamics-perturbed} corresponding to the zero eigenvalue are always unstable, even at \(\epsilon = 0\).  However, these modes are physically unimportant: they correspond to drift in the average value of the state \(\phi\) and do not otherwise influence the stability of the system.

\subsection{Stability of the Mathieu Equation}
\label{sec:Mathieu-properties}

Consider the  Mathieu equation 
\be
	\ddot{x}(t) \,+\, \lb \omega_\rmn^2 + \epsilon \, \cos(\omega t) \romn \rb x(t) ~=~ 0, 
  \label{Mathieu_norm.eq}
\ee
where $\omega_\rmn=\sqrt{k}$ is the unexcited (i.e. $\epsilon=0$) natural frequency of the system. 
For any amplitude $\epsilon$ and frequency $\omega$ of the parametric excitation, Equation~\cref{Mathieu_norm.eq}
is a linear periodic system (with period $2\pi/\omega$), and therefore 
its stability properties can be determined using the eigenvalues of its monodromy matrix, which in general must be 
computed numerically. Figure~\ref{fig:mathieu-instability-rescaled} shows the regions of stability/instability in the
$(\epsilon,\omega)$ plane. The 
grey-shaded areas are the unstable regions, while the remainder are the (neutrally) stable regions. 

The unstable regions ``touch'' the $\epsilon=0$ axis at discrete multiples of $\omega$, $\tfrac{\omega}{\omega_n} = 2, 1, \tfrac{1}{2}, \tfrac{1}{3}, \ldots$.  These represent excitation frequencies for which the system~\eqref{Mathieu_norm.eq} can be destabilized with 
arbitrarily small excitation amplitudes. They are
 referred to as ``Arnold tongues'',  and can also be found using perturbation methods. 
The behavior of each Arnold tongue around $\epsilon\approx 0$ depends on its index.  The first tongue at 
$\omega=2\omega_\rmn$ expands linearly for small epsilon, i.e. the stability boundary behaves like 
$\omega(\epsilon) =2 \omega_\rmn \pm a_1 \epsilon + O(\epsilon^2)$, while the stability boundary at the 
second tongue behaves like $\omega(\epsilon) = \omega_\rmn  \pm a_2 \epsilon^2 + O(\epsilon^3)$. More generally, 
the ``width'' of the $i^{\text{th}}$ tongue behaves like $\epsilon^n$, which makes each successive tongue ``thinner''. This 
has important implications for observing parametric resonance phenomena in systems with small  damping, 
as damping will effectively ``lift'' the tongues away from the $\epsilon=0$ axis as shown in 
\cref{fig:mathieu-instability-rescaled} ({\em right}). For the higher order tongues (i.e. $i\geq 2$), 
this effect makes the threshold of $\epsilon$ where instability is observed much higher than $\epsilon=0$, and consequently, only the first tongue is observed in applications where there is any amount of damping.

\begin{figure}[t]
\centering
	\includegraphics[width=0.45\columnwidth]{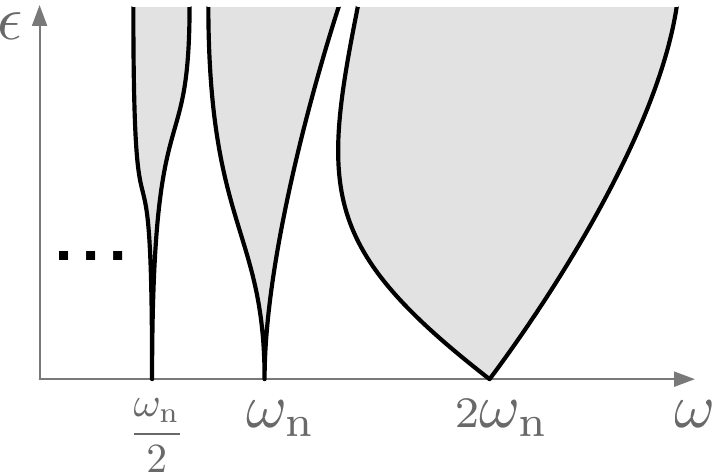} 
	\quad
	\includegraphics[width=0.45\columnwidth]{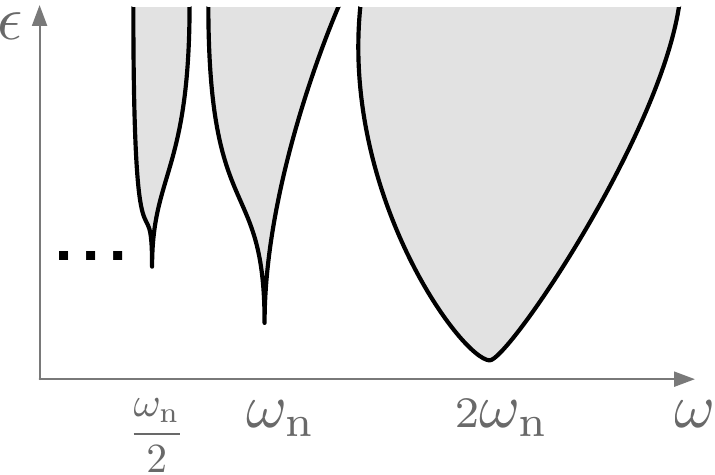}

	\caption{\label{fig:mathieu-instability-rescaled} The stability diagram for the undamped 
		Mathieu equation  ({\em left}),  and for the Mathieu equation with slight damping ({\em right}), where the 
		gray-shaded regions are the unstable regions in the $(\epsilon,\omega)$ plane of the parametric  
		excitation amplitude $\epsilon$ and frequency $\omega$.  
		In the undamped case, the system is unstable for arbitrarily small excitation amplitude $\epsilon$ at 
		excitation frequencies that are $2,1,1/2,1/3,...$ times the natural frequency $\omega_\rmn$. 
		The shaded areas near $\omega\approx \omega_\rmn/k, ~k=1/2,1,2,3,4,..$
		 are called the ``Arnold tongues'', which get progressively thinner with increasing $k$. Thus with slight damping, 
		 only the first tongue at $\omega\approx 2\omega_\rmn$ represents an instability phenomenon that can be 
		 triggered  with small 
		 excitation amplitude $\epsilon$. Only the first three tongues are shown in the diagrams above.  }
\end{figure}

The stability diagram for the more general Hill equation is qualitatively similar to that of ~\cref{fig:mathieu-instability-rescaled}. The shapes of the instability regions in that case are generally different, and depend on the details of the periodic function $f$. However, the asymptotic behavior (around $\epsilon\approx 0$) of the Arnold tongues is the same as shown in the figure.

\section{Full-network parametric forcing}
\label{sec:full-network-parametric-forcing}

We now consider the case where $P=L$ in model~\eqref{eq:linearized-swing-dynamics-perturbed}. This represents a scenario where all the edges of the network are subjected to the same oscillatory perturbation simultaneously. This case
is the simplest to analyze since 
\cref{eq:linearized-swing-dynamics-perturbed} becomes
\begin{align}
\ddot{\phi} + \left( \romn L + \epsilon \, L \, \cos(\omega t) \right) \phi  &= 0 , 		\nonumber	\\ 
\Leftrightarrow \hspace{2em} 
\ddot{\phi} +\left( \romn 1 + \epsilon \cos(\omega t) \right)   L \,  \phi & = 0.
\label{eq:linearized-swing-dynamics-perturbed-full-network}
\end{align}
Since \(L\) has a full set of independent eigenvectors, this system is diagonalizable 
using $L\, V = V\, \Omega^2$ where $V$ is the matrix whose columns are the eigenvectors, and $\Omega^2$ is a diagonal 
matrix made up of the corresponding eigenvalues $\lcb \sOmega^2_k \rcb$. 
The result is a collection of uncoupled Mathieu's equations:
\begin{align}
& L \, V = V \,  \Omega^{2} \implies \nonumber\\
& \left( V^{\text{-}1} \ddot{\phi} \right) + \lb \romn 1 + \epsilon \cos(\omega t)\rb   \Omega^{2} \left( V^{\text{-}1} \phi \right) = 0. \label{eq:lsd-perturbed-full-network-decoupledtm1}
\end{align}
Defining \(p(t) := V^{\text{-}1} \ddot{\phi}(t)\), the \(j^{\text{th}}\) such system is
\begin{align}
\label{eq:lsd-perturbed-full-network-decoupled-2}
\ddot{p}_j(t) + {\scriptstyle \Omega}_j^2 \lb 1 + \epsilon \cos(\omega t) \romn\rb   p_j(t) = 0.
\end{align}
This is the scalar Mathieu equation for an oscillator with unperturbed ``natural frequency'' \(\sOmega_j\) and parametric forcing of amplitude \({\scriptstyle \Omega}_j^2  \epsilon\) at frequency \(\omega\).  
For each $j$, the 
 stability properties of the system~\eqref{eq:lsd-perturbed-full-network-decoupled-2} 
 are the same as the  corresponding Mathieu equation~\eqref{Mathieu_norm.eq} as shown in~\cref{fig:mathieu-instability-rescaled}.

\begin{figure}[t]
\centering
\includegraphics[width=0.8\columnwidth]{./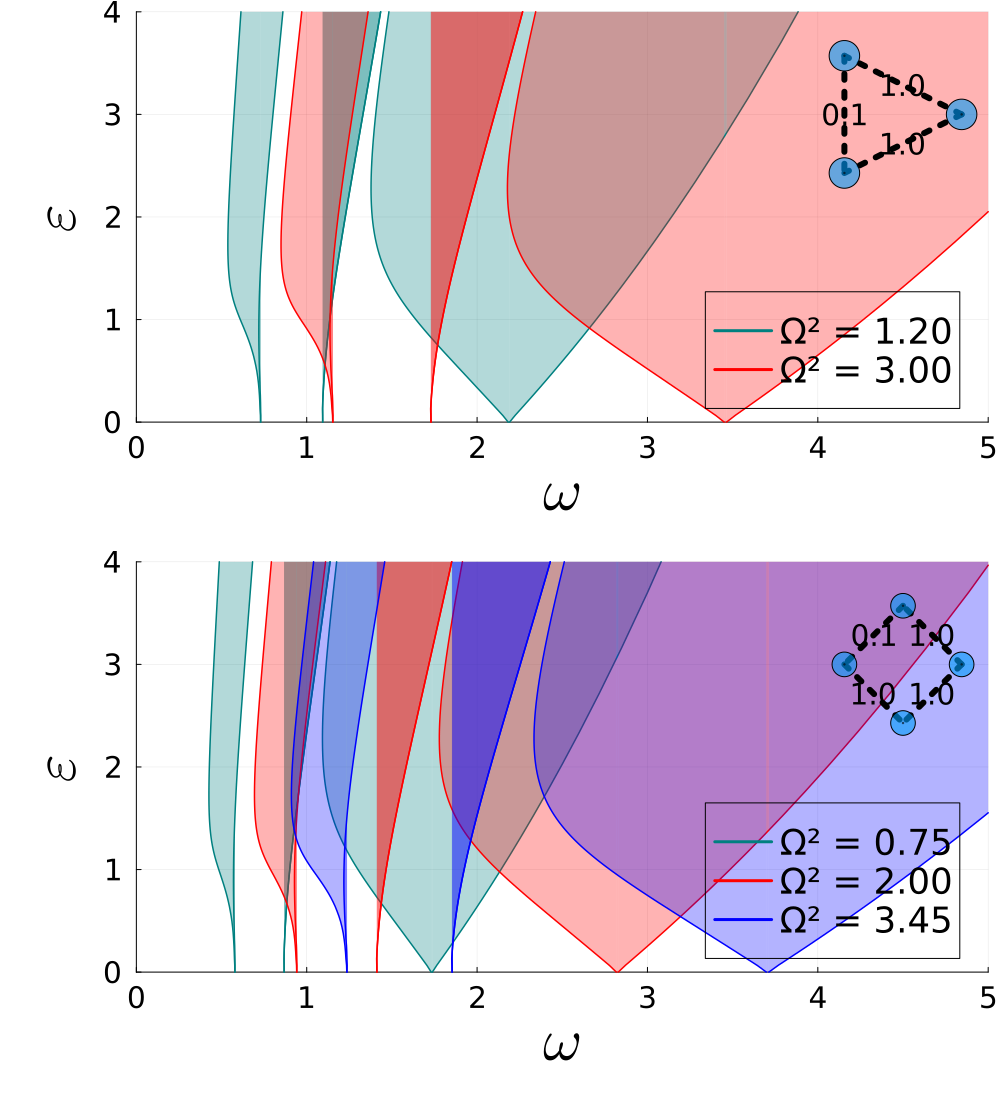}
\caption{\label{}\label{fig:mathieu-instability-full-network-forcing-examples} Stability diagrams for networked second order oscillators corresponding connected in a triangle (top) and square (bottom) layout with specified nominal edge weights.  The stability diagrams are overlapping copies of scaled versions of Figure \ref{fig:mathieu-instability-rescaled}, one per mode of the system.  The region of instability is the union of the unstable regions for each mode.}
\end{figure}

The stable region in \((\omega, \epsilon)\) space for~(\ref{eq:linearized-swing-dynamics-perturbed-full-network}) is thus the intersection of the stable regions for all the modes \(p_j,\ j = 1, \dots n\).  
\cref{fig:mathieu-instability-full-network-forcing-examples} 
shows such stability diagrams for two examples with 3 and 4 node networks. The basic pattern is as follows. 
For an 
$n$-node connected network, the Laplacian has $n-1$ non-zero eigenvalues, and therefore by~(\ref{eq:lsd-perturbed-full-network-decoupled-2}) there are $n-1$ tongues of each type. More precisely, there are $n-1$ first order tongues at 
excitation frequencies of 
\[
	\omega ~=~ 2 \, \sOmega_j, 
	\hspace{2em} j=1,\ldots,n-1, 
\]
with $\lcb \sOmega_j^2 \rcb$ being the non-zero eigenvalues of the Laplacian $L$. Similarly, there are $n-1$ second-order
tongues at excitation frequencies $\omega = \sOmega_j$, and so on. 

\cref{fig:mathieu-instability-full-network-forcing-examples} shows the first, second, and third-order tongues for two 
networks of 3 and 4 nodes respectively. For example, the 3-node network has two tongues of each type whose locations
are determined by the eigenvalues of the Laplacian. As mentioned in Section~\ref{sec:Mathieu-properties}, only the 
first-order tongues are phenomenologically relevant in systems with any amount of damping. 

\section{Subnetwork Parametric Excitation}
\label{sec:single-edge-parametric-forcing}

In this section we  consider periodic perturbation of the weight of a portion of the network. We begin with the case of single-edge 
forcing as the analysis is very similar to the more general case of a subnetwork as shown in \cref{sec:generalization-partial-network-forcing}.

In the single-edge forcing case, 
 \(P = e_{ik} e_{ik}^{\star} =: E_{ik}\), the Laplacian of the \(n\)-node graph with a single edge from node \(i \to k\).  Then system \eqref{eq:linearized-swing-dynamics-perturbed} becomes
\begin{align}
\label{eq:linearized-swing-dynamics-perturbed-one-edge}
\ddot{\phi} + \left( L + \epsilon\, E_{ik} \cos(\omega t) \romn \right) \phi = 0
\end{align}

Unlike the case of the full network forcing, we cannot treat this system as a set of uncoupled second-order oscillators
since the pair $L,E_{ik}$ will typically not be simultaneously diagonalizable. Instead, we will find its critical parametric forcing frequencies and stability behavior using a perturbation expansion around \(\epsilon = 0\). 
This analysis is quite delicate, and is presented in Section~\ref{sec:regular-perturbation-analysis}. It relies 
on a vector-valued multiple scale perturbation analysis that we develop in the appendices. This is a generalizations of the standard scalar-valued multiple scale analysis~\cite{nayfeh1995,bender1999Advanced} and we believe is one contribution of the present paper.

We now summarize the main  stability result of this section before presenting the perturbation technique. 
For the reasons outlined earlier, we only consider the first-order instability tongues since they are the most
phenomenologically important. 

Our main result is summarized next. 
\begin{itemize} 
	\item All first-order tongues of the stability diagram of~(\ref{eq:linearized-swing-dynamics-perturbed-one-edge})
		occur at the following excitation frequencies 
		\be
			\omega_{lm} =  \sOmega_l + \sOmega_m , 
				\hstm \sOmega_l,\sOmega_m \in   \sqrt{ {\rm eigs}(L)- \{0\} } ,
		  \label{tongue_freq.eq}
		\ee
		where $\sOmega_l, \sOmega_m$ are any two non-zero natural frequencies. 
 	\item At each $\omega_{lm}$, the behavior of the stability boundary of that tongue is 
		\begin{align}  
			\omega(\epsilon) &= \omega_{lm} \pm a_{lm} \epsilon + O(\epsilon^2) 	 \nonumber 	\\ 
			\mbox{where}\hstm 
			a_{lm} &= \frac{\abs{\mathbf{v}_m^{*} \,  e_{ik} e_{ik}^{\star} \,  \mathbf{v_l}}}
						{2 \left( {\scriptstyle \Omega}_l \, {\scriptstyle \Omega}_m \right)^{1/2}}
		  \label{a_widths.eq}
		\end{align}
\end{itemize} 

The first statement~\req{tongue_freq} says that all first order tongues occur at sums of the natural frequencies 
of the unperturbed system. Those in turn are given by the square roots of the non-zero eigenvalues of the network Laplacian. Since there 
are $n-1$ of those, then the number of first-order instability tongues is $\lb (n-1)^2+(n-1)\rb/2$ with possible 
repetitions if some natural frequencies are repeated. 

The quantity $a_{lm}$~\req{a_widths} can be visualized as the ``width'' of the $lm^{\text{th}}$ tongue as illustrated in 
Figure~\ref{tongue_angles.fig}. This quantity can be used as a proxy for the susceptibility of the system to 
instability when excited near that frequency since it captures (up to first order) the width of the 
excitation-frequency range that causes a parametric instability. We note that this number depends not 
only on the eigenvalues $\sOmega_l^2,\sOmega_m^2$ of $L$, but on the corresponding eigenvectors 
$\mathbf{v}_l,\mathbf{v}_m$ as well. Thus first-order tongues in this case will typically come with a large 
variety of widths, some of which may actually be zero.

\begin{figure}[t]
	\centering
	\includegraphics[width=0.9\columnwidth]{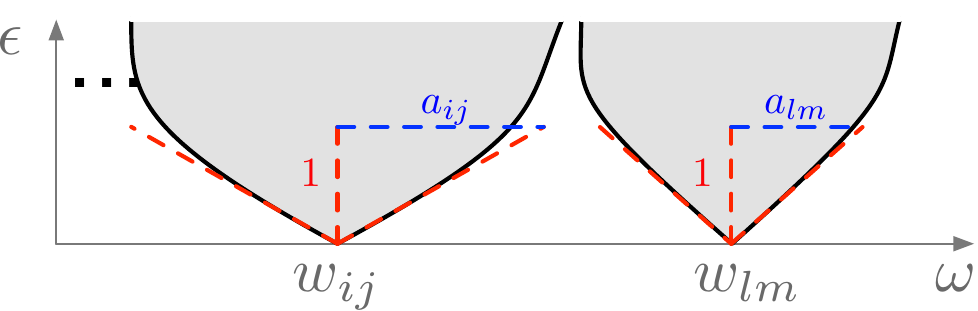}
	
	\caption{Depiction of the first-order tongues of the single-edge excitation 
		model~(\ref{eq:linearized-swing-dynamics-perturbed-one-edge}). All possible locations of 
		such tongues are given by sums  $\omega_{lm}= \sOmega_l + \sOmega_m$ of natural frequencies 
		$\sOmega_k \in \sqrt{{\rm eigs}(L)}$ of the unperturbed system, which in turn are determined by the 
		eigenvalues of the network Laplacian $L$.  The ``widths'' $a_{lm}$  
		of each tongue are determined  by 
		both eigenvalues and eigenvectors of $L$ by~\req{a_widths}. 
		} 
	\label{tongue_angles.fig} 
\end{figure}

\subsection{Perturbation Analysis}
\label{sec:regular-perturbation-analysis}

We assume that \(\phi(t; \epsilon)\) is analytic in \(\epsilon\), so we may write
\begin{align}
\label{eq:swing-dynamics-2nd-order-regular-perturbation}
\phi(t; \epsilon) & =: \phi_0(t) + \epsilon \phi_1(t) + \epsilon^2 \phi_2(t) + \mathcal{O}(\epsilon^3),
\end{align}
where \(\phi_k(t)\) are differentiable functions of time.

Applying this expansion to system \eqref{eq:linearized-swing-dynamics-perturbed-one-edge} and collecting powers of \(\epsilon\), we get the triangular system
\begin{align}
\label{eq:linearized-swing-dynamics-regular-perturbation-equations}
\mathcal{O}(\epsilon^0):\  & \ddot{\phi}_0 + L \phi_0 = 0\\
\mathcal{O}(\epsilon^1):\  & \ddot{\phi}_1 + L \phi_1 = \text{-} \cos(\omega t) E_{ik} \phi_0 \\
\vdots \nonumber\\
\mathcal{O}(\epsilon^{n}):\  & \ddot{\phi}_n + L \phi_n  = \text{-} \cos(\omega t) E_{ik} \phi_{n \text{-} 1}.
\end{align}
\(\phi_k\) at each order of \(\epsilon\) behaves as a coupled harmonic oscillator with additive input \(\phi_{k\text{-}1}\).  \(\phi_0(t)\) is the solution to the unperturbed, linearized oscillator dynamics~\eqref{eq:linearized-swing-dynamics}:
\begin{align}
\label{eq:swing-dynamics-2nd-order-regular-perturbation-Oeps0-sol}
\phi_0(t) =  e^{ j \sqrt{L} t } (V \mathbf{c}) + e^{ \text{-}j \sqrt{L} t } (V \mathbf{c})^{\star}
\end{align}
for some \(\mathbf{c} \in \mathbb{C}^n\).

To analyze the \(\mathcal{O}(\epsilon^{1})\) term, it will be useful to work in the basis of the eigenvectors of \(L\).  With \(L V = V \Omega^2\) (\cref{sec:laplacian-properties}), we can write
\begin{align}
V^{\text{-}1} \phi_0(t) = e^{j \Omega t} \mathbf{c} + e^{\text{-}j \Omega t} \mathbf{c}^{\star} \label{eq:1st-order--sol-Oeps1-spectral}
\end{align}
for appropriate initial conditions \(\mathbf{c}\).  The \(\mathcal{O}(\epsilon^1)\) system is
\begin{align}
& \left( V^{\text{-}1} \ddot{\phi}_1 \right) + \Omega^2 \left( V^{\text{-}1} \phi_1 \right) = - V^{\text{-}1} E_{ik} V \cos(\omega t) \left( V^{\text{-}1} \phi_0 \right) \nonumber\\
& = -\tfrac{1}{2} V^{\text{-}1} E_{ik} V \cos(\omega t) \left( e^{j \Omega t} \mathbf{c} + e^{\text{-} j \Omega t} \mathbf{c}^{\star} \right) \nonumber\\
& = -\tfrac{1}{2} V^{\text{-}1} E_{ik} V \left( e^{j (\Omega \pm \omega I) t} \mathbf{c} + e^{\text{-} j (\Omega \pm \omega I) t} \mathbf{c}^{\star} \right)\label{eq:swing-dynamics-2nd-order-Oeps1-spectral-projected}
\end{align}
Each forcing term in system~\eqref{eq:swing-dynamics-2nd-order-Oeps1-spectral-projected} is thus of the form
\begin{align}
- \tfrac{1}{2} \mathbf{v}_m^{\star} E_{ik} \mathbf{v}_l \left( e^{j (\Omega_l \pm \omega) t} c_l + e^{j (\Omega_l \pm \omega) t} c^{\star}_l \right) \label{eq:swing-dynamics-2nd-order-Oeps1-spectral-projected-forcing-term}
\end{align}
where \(l,m = 1,2, \dots, n\).  System \eqref{eq:swing-dynamics-2nd-order-Oeps1-spectral-projected} is neutrally stable when no natural frequency \(\pm j {\scriptstyle \Omega}\) equals any frequency of the forcing \(\pm j {\scriptstyle \Omega} \pm j \omega \).  Indeed, if any of the forcing frequencies equals one of these modal frequencies, so that \(j {\scriptstyle \Omega}_m = j  {\scriptstyle \Omega}_l +  j \omega  \), say, the corresponding mode of the solution experiences harmonic resonance and is unstable.  In the parlance of perturbation theory, the corresponding forcing term is \emph{secular}.  Additional care has to be taken in this line of reasoning when the eigenvalues of \(L\) have multiplicity \(> 1\).  This case is handled in \cref{sec:multi-scale-method-with-multiplicity}.

Consequently the \(\mathcal{O}(\epsilon)\) term of the solution \eqref{eq:swing-dynamics-2nd-order-regular-perturbation}  is unstable when

\begin{align}
\label{eq:basic-instability-criterion}
 \pm {\scriptstyle \Omega}_m = \pm {\scriptstyle \Omega}_l \pm \omega
\end{align} 
for any \(l, m = 1, 2, \dots, n\).

By assumption, \(\omega > 0\), and all the eigenvalues \({\scriptstyle \Omega}\) are non-negative.  So this list of conditions reduces to the following two:
\begin{align}
\omega & = \abs{{\scriptstyle \Omega}_m + {\scriptstyle \Omega}_l} \label{eq:basic-instability-criterion-filtered-plus} \\
\omega & = \abs{{\scriptstyle \Omega}_m - {\scriptstyle \Omega}_l} \label{eq:basic-instability-criterion-filtered-minus} \\
l, m & = 1, 2, \dots, n \nonumber
\end{align}
The critical parametric forcing frequency, if it exists, must be the sum or difference of the square roots of the eigenvalues of the graph Laplacian.  This condition is subject to the validity of the regular perturbation analysis, and also requires that the coefficient of the relevant additive forcing term (\(\mathbf{v}_m^{\star} E_{ik} \mathbf{v}_l\)) corresponding to the eigenvalues \({\scriptstyle \Omega}_l\), \({\scriptstyle \Omega}_m\) of \(L\) be non-zero (\cref{eq:swing-dynamics-2nd-order-Oeps1-spectral-projected-forcing-term}).  This requirement is investigated in more detail in \cref{sec:controllability-interpretation}.

The boundedness of the \(\mathcal{O}(\epsilon^2)\) and higher order terms in the ansatz~\eqref{eq:swing-dynamics-2nd-order-regular-perturbation}  can be investigated similarly.  Assuming \(\phi_1(t)\) is bounded, it has the form
\begin{align}
\label{eq:swing-dynamics-2nd-order-regular-perturbation-Oeps1-sol}
V^{\text{-}1} \phi_1(t) = & e^{j \Omega t} \mathbf{c}_0 + e^{j (\Omega \pm \omega I)t} \mathbf{c}_1 + e^{\text{-}j (\Omega \pm \omega I) t} \mathbf{c}_2  \nonumber\\
& + \text{complex conjugates,}
\end{align}
where \(\mathbf{c}_{*}\) are constants that depend on the initial conditions.  Then \(\phi_2(t)\) is the solution of the $\mathcal{O}(\epsilon^2)$ system
\begin{align}
V^{\text{-}1} \ddot{\phi}_2 + \Omega^2\, V^{\text{-}1} \phi_2 = - \tfrac{V^{\text{-}1} E_{ik} V}{2} \left( e^{j \omega t} + e^{-j \omega t} \right) V^{\text{-}1} \phi_1
\end{align}
which contains additive forcing at frequencies \({\scriptstyle \Omega} \pm \omega\), as before, as well as at \({\scriptstyle \Omega} \pm 2 \omega\).  Thus the \(\phi_2(t)\) system experiences harmonic forcing whenever any of these frequencies equals a natural frequency of the system, or
\begin{align*}
 \pm j {\scriptstyle \Omega}_m & = \pm j {\scriptstyle \Omega}_l \pm j \omega, \text{ or} \\
 \pm j {\scriptstyle \Omega}_m & = \pm j {\scriptstyle \Omega}_l \pm 2 j \omega
\end{align*}
for any \(l, m = 1, 2, \dots, n\).  The first of these conditions does not hold since, by assumption, \(\phi_1(t)\) is stable.  As before (\cref{eq:basic-instability-criterion-filtered-plus} and \cref{eq:basic-instability-criterion-filtered-minus}), this list of conditions reduces to \(\omega = \abs{{\scriptstyle \Omega}_m \pm {\scriptstyle \Omega}_l}/2\).  By a similar argument, we can show that the \(\mathcal{O}(\epsilon^{n})\) term \(\theta_n(t)\) is unstable when
\begin{align}
\omega & = \tfrac{1}{n} \abs{{\scriptstyle \Omega}_m + {\scriptstyle \Omega}_l} \label{eq:basic-instability-criterion-filtered-Oeps2-plus} \\
\omega & = \tfrac{1}{n} \abs{{\scriptstyle \Omega}_m - {\scriptstyle \Omega}_l} \label{eq:basic-instability-criterion-filtered-Oeps2-minus}
\end{align}

\cref{fig:basic_instability_criterion_unfiltered} compares (to \(\mathcal{O}(\epsilon)\)) the numerically computed stability diagram for networked oscillators with our predictions for the critical parametric forcing frequencies.  We observe that:
\begin{itemize}
\item While all the critical frequencies (as \(\epsilon \to 0\)) to first order are accounted for, criteria~\eqref{eq:basic-instability-criterion-filtered-plus} and \eqref{eq:basic-instability-criterion-filtered-minus} generate ``false positives'' as well.
\item The regular perturbation analysis gives us no information on the slopes or widths of the tongues.  Thus we cannot order by severity (or excitability of parametric resonance) either the critical frequencies or choices of the edge to be perturbed.
\end{itemize}

To refine the instability conditions, we apply a multiple time-scale perturbation analysis to our original system \eqref{eq:linearized-swing-dynamics-perturbed-one-edge}.  This allows us to investigate the behavior of the system in the vicinity of the critical frequencies in \((\omega, \epsilon)\) parameter space, resolve the ``false positives'' and obtain a measure of the instability at each critical frequency.

\begin{figure}[t]
\centering
\includegraphics[width=0.8\columnwidth]{./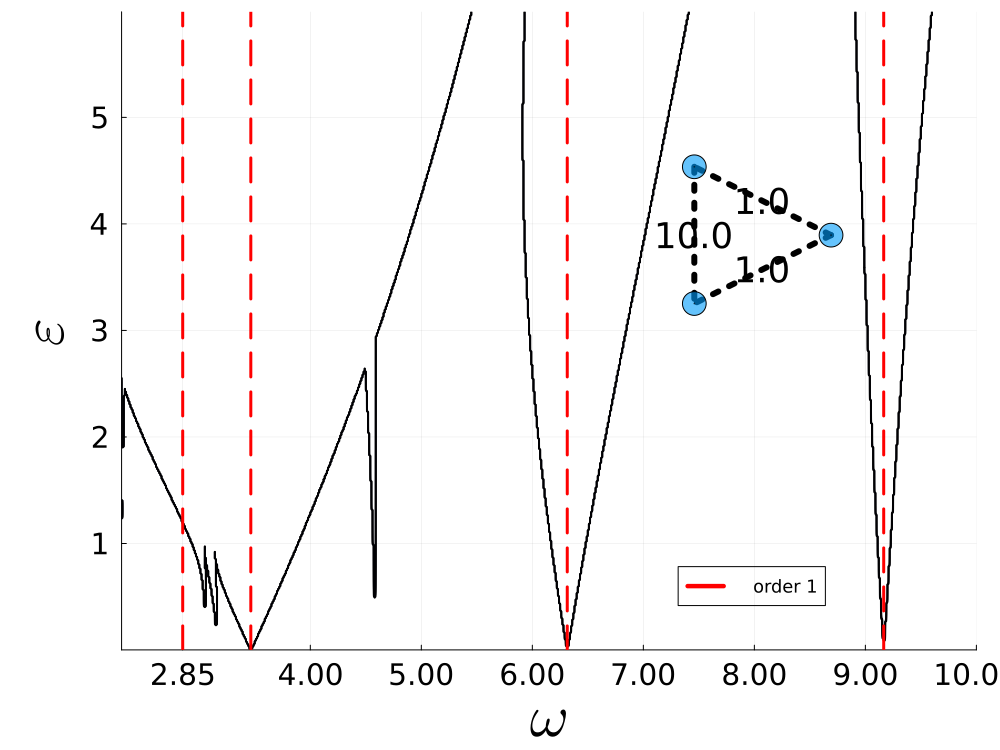}
\caption{\label{}\label{fig:basic_instability_criterion_unfiltered} The numerically computed stability diagram for a triangular graph with unequal edge weights.  The overlaid red lines are the critical parametric forcing frequencies predicted (to first order) by the regular perturbation analysis of Section \ref{sec:regular-perturbation-analysis}.  The narrow tongues at \(\omega \approx 2.9\) and \(\omega \approx 4.7\) correspond to the predictions from higher order perturbations (\cref{eq:basic-instability-criterion-filtered-Oeps2-plus} and \cref{eq:basic-instability-criterion-filtered-Oeps2-minus}).  We observe that while the Arnold tongue positions are predicted correctly by the perturbation analysis, the prediction of \(2.85\) is a false positive.  The perturbation analysis is refined using a multiple time-scale method in \cref{sec:vector-valued-multiple-scale-perturbation-analysis}.}
\end{figure}

We do this in the following sequence:
\begin{enumerate}
\item Investigate the stability of the system along straight lines in the \((\omega, \epsilon)\) parameter space through the critical frequencies.  Each line is determined by the \(\omega\)-intercept \( \omega_0 \) and the slope \(\left. \tfrac{\mathrm{d}\omega}{\mathrm{d}\epsilon}\right\rvert_{\epsilon = 0} =: a \).
\item Expand \(\phi(t)\) along the fast and slow time scales \(t\) and \(\epsilon t\).  This decomposition is non-unique.
\item System stability requires both the fast and slow time dynamics to be stable.  Enforce this condition to give us both (i) a suitable decomposition of the dynamics and (ii) stability criteria for the \(\omega\) intercept and \( a \).
\end{enumerate}

The details of the multiple scale perturbation analysis are presented in \cref{sec:vector-valued-multiple-scale-perturbation-analysis}.  We summarize the results here:

\begin{theorem}[Instability criteria]
Consider the curve in the \((\omega, \epsilon)\) parameter plane given by \(\omega(\epsilon) = \omega_0 + a \epsilon + \mathcal{O}(\epsilon^2)\).  Let \({\scriptstyle \Omega}_m^2\) and \(\mathbf{v}_m\) be the eigenvalues and eigenvectors of the graph Laplacian \(L\), for \(m = 1, 2, \dots, n\).  In the limit \(\epsilon \to 0\), system~\eqref{eq:linearized-swing-dynamics-perturbed-one-edge} is unstable at \((\omega(\epsilon), \epsilon)\) if
\begin{align}
& \omega_0 = {\scriptstyle \Omega}_l + {\scriptstyle \Omega}_m,\ l, m \in 1, 2, \dots, n \nonumber\\
& \abs{a} < \frac{\abs{\mathbf{v}_m^{\star} E_{ik} \mathbf{v}_l}}{2 \left( {\scriptstyle \Omega}_l {\scriptstyle \Omega}_m \right)^{1/2}}. \label{eq:multiscale-analysis-full-criteria}
\end{align}
\end{theorem}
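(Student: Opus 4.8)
The plan is to run a two-timescale (multiple-scale) analysis, introducing a slow time $\tau = \epsilon t$ alongside the fast time $t$ and tracking the amplitudes of the two resonant modes as functions of $\tau$. First I would posit $\phi(t,\tau) = \phi_0(t,\tau) + \epsilon \phi_1(t,\tau) + O(\epsilon^2)$ with $\tfrac{d}{dt}\mapsto \partial_t + \epsilon\,\partial_\tau$, and write the leading term in the eigenbasis of $L$ as $\phi_0 = \sum_p [A_p(\tau)\,e^{j\Omega_p t} + \text{c.c.}]\,\mathbf{v}_p$, so the modal amplitudes $A_p$ are now slowly varying rather than constant. Substituting the line $\omega = \omega_0 + a\epsilon$ with $\omega_0 = \Omega_l + \Omega_m$ turns the detuning into a slow phase, since $\omega t = (\Omega_l+\Omega_m)t + a\tau$; this is the device that lets the straight-line slope $a$ enter the slow dynamics.

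At $O(\epsilon)$ the equation reads $\partial_t^2 \phi_1 + L\phi_1 = -2\,\partial_t\partial_\tau \phi_0 - E_{ik}\cos(\omega t)\,\phi_0$. The central step is to expand $E_{ik}\mathbf{v}_p = \sum_q (\mathbf{v}_q^{\star} E_{ik}\mathbf{v}_p)\,\mathbf{v}_q$ and ask which forcing frequencies $\Omega_p \pm \omega$ hit a natural frequency $\Omega_q$, i.e.\ which terms are \emph{secular}. Near $\omega_0 = \Omega_l + \Omega_m$ exactly two survive: the conjugate part of mode $m$ resonantly forces mode $l$ (at $\Omega_l = \omega - \Omega_m$), and symmetrically mode $l$ forces mode $m$, so the pair $(l,m)$ is coupled through the single scalar $\mathbf{v}_m^{\star} E_{ik}\mathbf{v}_l$. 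Imposing the Fredholm solvability condition — requiring the coefficient of each resonant $e^{j\Omega_l t}\mathbf{v}_l$ (and its $m$-counterpart) to vanish so that $\phi_1$ stays bounded — collects the $-2\,\partial_t\partial_\tau\phi_0$ contribution against the resonant forcing and yields a closed pair of amplitude ODEs: $2j\Omega_l\,\dot A_l = -\tfrac12(\mathbf{v}_l^{\star} E_{ik}\mathbf{v}_m)\,A_m^{\star}\,e^{ja\tau}$ and $2j\Omega_m\,\dot A_m = -\tfrac12(\mathbf{v}_m^{\star} E_{ik}\mathbf{v}_l)\,A_l^{\star}\,e^{ja\tau}$.

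Next I would absorb the detuning by setting $A_l = B_l e^{ja\tau/2}$, $A_m = B_m e^{ja\tau/2}$, which renders the slow system constant-coefficient in $(B_l, B_m^{\star})$ with generator $M = \left[\begin{smallmatrix} -ja/2 & j\beta/(4\Omega_l) \\ -j\beta/(4\Omega_m) & ja/2\end{smallmatrix}\right]$, where $\beta = \mathbf{v}_m^{\star} E_{ik}\mathbf{v}_l$. Because $\operatorname{tr} M = 0$, its eigenvalues are $\pm\sqrt{-\det M}$ with $\det M = \tfrac{a^2}{4} - \tfrac{|\beta|^2}{16\,\Omega_l\Omega_m}$, so growth occurs precisely when $\det M < 0$, which is exactly $\abs{a} < |\mathbf{v}_m^{\star} E_{ik}\mathbf{v}_l|/(2(\Omega_l\Omega_m)^{1/2})$, the claimed criterion. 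Repeating the computation for the difference-frequency candidates $\omega_0 = \abs{\Omega_m - \Omega_l}$ couples $A_l$ to $A_m$ rather than to its conjugate, giving a generator with $\det = \tfrac{a^2}{4} + \tfrac{|\beta|^2}{16\,\Omega_l\Omega_m} > 0$ and purely imaginary eigenvalues; the amplitudes then merely beat, so those frequencies are stable — this is how the multiple-scale method resolves the ``false positives'' flagged after the regular expansion.

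The main obstacle I anticipate is the solvability bookkeeping in the vector-valued setting, specifically proving that no further resonant term has been missed. For generic $\omega_0$ one must check that the only frequencies in the $O(\epsilon)$ forcing landing in the spectrum of $\partial_t^2 + L$ are the two identified above, so that the slow dynamics genuinely closes on the pair $(l,m)$, and that the remaining off-resonant forcing contributes only bounded, non-secular pieces to $\phi_1$. The degenerate situation — repeated Laplacian eigenvalues, or the coincidence $\Omega_l = \Omega_m$ giving $\omega_0 = 2\Omega_l$ — destroys this clean pairing and forces the solvability condition to be projected onto a higher-dimensional eigenspace rather than a single $\mathbf{v}_l$; this is the case deferred to the multiplicity analysis, and it is where I would expect the argument to demand the most care.
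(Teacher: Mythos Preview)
Your argument is correct and follows the same overall multiple-scale strategy as the paper: expand in $\epsilon$, write $\phi_0$ in the Laplacian eigenbasis with slowly varying amplitudes, kill the secular terms at $O(\epsilon)$, reduce to a $2\times 2$ slow system in the resonant pair, and read off instability from the sign of its determinant. You also correctly dispose of the difference-frequency case by noting it couples $A_l$ to $A_m$ rather than $A_m^\star$, matching the paper's Case~I.

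The one substantive difference is how the detuning enters. The paper first rescales time so the forcing has unit frequency, introduces $\kappa = 1/\omega^2$, and expands along $\kappa(\epsilon)=\kappa_0+\kappa_1\epsilon$; the slope $\kappa_1$ then shows up as an extra $\kappa_1 L$ term in the $O(\epsilon)$ equation, and only at the very end is $\kappa_1$ converted back to $a=\omega'(0)$ via $\kappa_1=-2a/\omega_0^3$. You instead keep the original time and let $\omega t=\omega_0 t+a\tau$ push the detuning directly into a slow phase $e^{ja\tau}$ on the resonant forcing, which you then absorb by the rotation $A_p=B_p e^{ja\tau/2}$. Your route is a bit more direct --- it lands immediately on a trace-zero $2\times2$ generator in $(B_l,B_m^\star)$ and avoids the final $\kappa\!\to\!\omega$ bookkeeping --- while the paper's rescaling makes the resonance conditions integer-valued ($\sqrt{\kappa_0}\,|\Omega_l\pm\Omega_m|=1$), which slightly simplifies identifying secular exponentials. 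Both are standard variants of the same method and produce identical criteria; your caveat about repeated eigenvalues is exactly the case the paper defers to its multiplicity appendix.
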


We interpret the result as follows:
\begin{itemize}
\item The stability criteria obtained from the regular perturbation analysis suggest that both \(\pm \omega_0 = {\scriptstyle \Omega}_l - {\scriptstyle \Omega}_m\) and \(\omega_0 = {\scriptstyle \Omega}_l + {\scriptstyle \Omega}_m\) cause instability.  However the method of multiple time-scales rules out the former.  This is the first source of false positives in the results suggested by the regular perturbation analysis.
\item \( a = 0\) corresponds to a vertical line in the \((\omega, \epsilon)\) parametric plot.  This implies that to \(\mathcal{O}(\epsilon)\), the stability boundary has a symmetric V-shape, where the critical angle from the vertical to either side is \(\pm \tan^{-1} \left( \tfrac{\abs{\mathbf{v}_m^{\star} E_{ik} \mathbf{v}_l}}{ 2 \left( {\scriptstyle \Omega}_l {\scriptstyle \Omega}_m \right)^{1/2} } \right)\).
\item If \(\mathbf{v}_m^{\star} E_{ik} \mathbf{v}_l = 0\), then the system is stable around \((\omega_0, 0)\).  This is the second source of false positives in the regular perturbation analysis.  It is also consistent with the result of the controllability analysis (\cref{sec:controllability-interpretation}), which posits the same criterion for instability.
\end{itemize}
\subsection{Comparison with Numerical Calculations}
\cref{fig:gtri-10-edge-12-slope-comparison} compares the approximation~\eqref{eq:multiscale-analysis-full-criteria} to the results of numerical computation for the stability of a test graph.  We observe that the largest Arnold tongue structures correspond to the critical values of \(\omega\) predicted by the \(\mathcal{O}(\epsilon)\) term in the perturbation analysis.  The slopes of the tongues as \(\epsilon \to 0\) are captured accurately by the multiple time-scale analysis above.

\begin{figure}[t]
\centering
\includegraphics[width=0.8\columnwidth]{./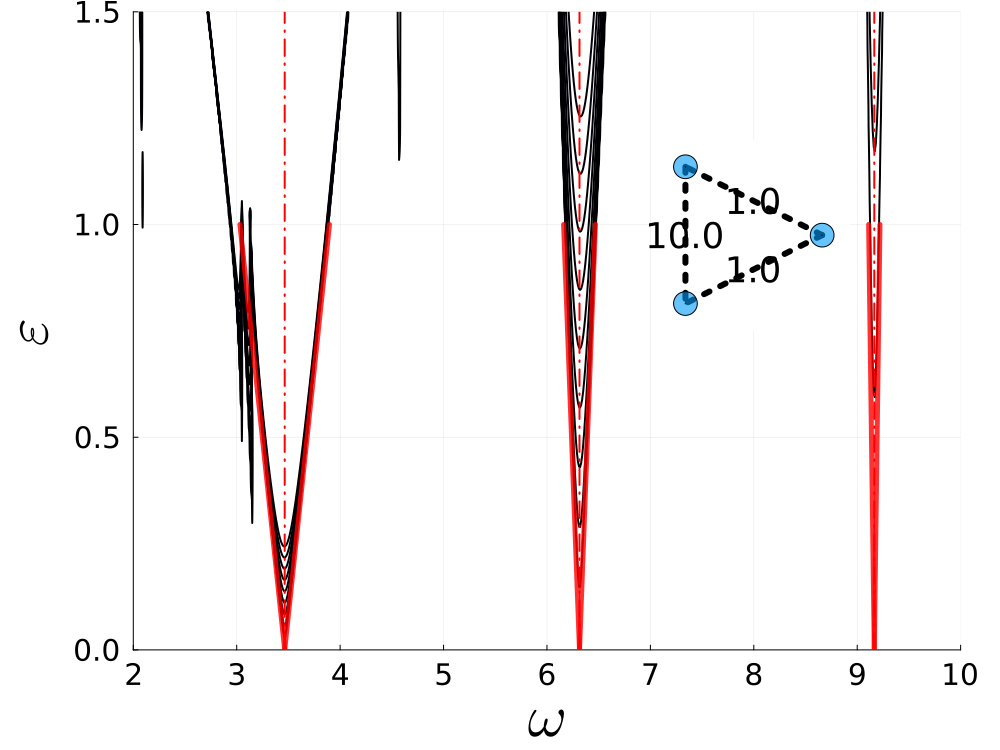}
\caption{\label{fig:gtri-10-edge-12-slope-comparison}A comparison of the numerically computed stability diagram with the stability estimates from perturbation analysis for swing dynamics on triangular graph with one time-periodically perturbed edge.  The wide Arnold tongue positions in \((\omega, \epsilon)\) space (black curves) correspond to first order perturbation analysis and are predicted correctly (dashed red lines).  The multiple time-scale analysis from section \ref{sec:vector-valued-multiple-scale-perturbation-analysis} eliminates the false positives observed in figure \ref{fig:basic_instability_criterion_unfiltered}.  Additionally, the estimate of the slope of the Arnold tongues as \(\epsilon \to 0\) computed using \eqref{eq:k-to-omega-transformation-2} (solid red lines) match the numerical results closely, and thus provide a simple way of predicting the degree of instability of each critical parametric forcing frequency \(\omega\) from the properties of the graph Laplacian.  The narrower, partially resolved Arnold tongues correspond to higher order terms in the perturbation expansion \eqref{eq:swing-dynamics-2nd-order-regular-perturbation}.}
\end{figure}

\subsection{Controllability Interpretation}
\label{sec:controllability-interpretation}

We can understand the stability criterion \eqref{eq:multiscale-analysis-full-criteria} using a controllability argument.  The oscillator dynamics with single edge forcing are given by \cref{eq:linearized-swing-dynamics-perturbed-one-edge}, or
\begin{align*}
\ddot{\phi} + \left( L + \epsilon E_{ik} \cos(\omega t) \right) \phi = 0.
\end{align*}
Following the perturbation expansion \eqref{eq:swing-dynamics-2nd-order-regular-perturbation} and solution~\eqref{eq:swing-dynamics-1st-order-multiple-scale-sol-Oeps1}, we have, up to \(\mathcal{O}(\epsilon)\)
\begin{align*}
\phi_0(t) & = V \left( e^{j \Omega t} \mathbf{c} + e^{\text{-}j \Omega t} \mathbf{c}^{\star} \right)
\end{align*}
where \(\phi_0(t)\) is the solution to the unperturbed, linearized swing equations, and appears as an additive input in the equation for the \(\mathcal{O}(\epsilon)\) term \(\phi_1(t)\):
\begin{align*}
\ddot{\phi}_1(t) & = -L \phi_1(t) - E_{ik} \cos(\omega t) \phi_0(t) \\
& = -L \phi_1(t) - \underbrace{E_{ik} V}_{\mathcal{B}} \underbrace{e^{j (\Omega + \omega I) t} \mathbf{c}}_{u_1(t)} - \underbrace{E_{ik} V}_{\mathcal{B}} \underbrace{e^{\text{-}j (\Omega \text{-} \omega I)t}  \mathbf{c}}_{u_2(t)} \\
& \quad - \mathcal{B} u_1^{\star}(t) - \mathcal{B} u_2^{\star}(t).
\end{align*}
Here, \(\mathcal{B} := E_{ik} V\).  The inputs \(u_1(t)\) and \(u_2(t)\) are composed of sinusoids at the frequencies \((\pm \Omega_l \pm \omega)\), which may destabilize the system if (following conditions~\eqref{eq:basic-instability-criterion-filtered-plus} and \eqref{eq:basic-instability-criterion-filtered-minus}) this equals \(\pm {\scriptstyle \Omega}_m\), one of the modes of the unforced system.  However, this also requires the system to be controllable from those inputs.  For instance, when \(j {\scriptstyle \Omega}_m = j \left( {\scriptstyle \Omega}_l + \omega \right)\) for some \(l, m\), we have
\begin{align}
\ddot{\phi}_1 = & -L \phi_1(t) - \tfrac{1}{2} \underbrace{E_{ik} \mathbf{v}_l}_{\mathcal{B}_l} \underbrace{e^{j (\Omega_l + \omega) t} c_l}_{u_l(t)} \nonumber\\
& + \text{other inputs} \label{eq:controllability-second-order-illustration}
\end{align}
where the other inputs include the complex conjugate of \(u_l(t)\), as well as inputs that do not destabilize the system.  Applying the eigenvector controllability test to system~\eqref{eq:controllability-second-order-illustration} rewritten in first order form (\cref{sec:eigenvector-test-for-controllability}) gives us the following criterion for uncontrollability:
\begin{align}
  \begin{bmatrix} \mathbf{v}_{m}^{\star}  & \text{-} j {\scriptstyle \Omega}_m^{\text{-}1} \mathbf{v}_{m}^{\star} \end{bmatrix} \begin{bmatrix} 0 & 0 \\ E_{ik} & 0 \end{bmatrix} \begin{bmatrix} \mathbf{v}_{l} \\ j {\scriptstyle \Omega}_l \mathbf{v}_{l} \end{bmatrix} & \stackrel{?}{=} 0 \nonumber\\ \implies \mathbf{v}_m^{\star} E_{ik} \mathbf{v}_l & \stackrel{?}{=} 0. \label{eq:controllability-analysis-criterion}
\end{align}
When this is the case, the mode with frequency \(\pm j {\scriptstyle \Omega}_m\) is not controllable from \(u_l(t)\), and thus the harmonic resonance predicted by the instability criteria~\eqref{eq:basic-instability-criterion-filtered-plus} and \eqref{eq:basic-instability-criterion-filtered-minus} cannot occur.  The controllability-based criterion~\eqref{eq:controllability-analysis-criterion} thus provides an alternative interpretation of the criterion obtained by the multiple-scale perturbation analysis \eqref{eq:multiscale-analysis-full-criteria}.

\subsection{Generalization to Subnetwork Excitation}
\label{sec:generalization-partial-network-forcing}

The methods described in sections~\ref{sec:full-network-parametric-forcing}, \ref{sec:regular-perturbation-analysis} and~\cref{sec:vector-valued-multiple-scale-perturbation-analysis} can be extended to the parametric forcing of any subnetwork \(P\) of the graph \(L\), where \(P\) and \(L\) are as in \cref{eq:linearized-swing-dynamics-perturbed}.  The critical frequencies are still subject to the same conditions~\eqref{eq:basic-instability-criterion-filtered-Oeps2-plus} and \eqref{eq:basic-instability-criterion-filtered-Oeps2-minus}, so they are unchanged.  The perturbation analysis is unchanged but for replacing the single edge Laplacian \(E_{ik}\) with \(P\) in \eqref{eq:multiscale-analysis-full-criteria}.  The slopes of the first-order Arnold tongues are thus given by
\begin{align}
\label{eq:partial_network_forcing_slope_criterion}
\abs{a_{lm}} = & \frac{\abs{\mathbf{v}_m^{\star} P \mathbf{v}_l}}{2 \left( {\scriptstyle \Omega}_l {\scriptstyle \Omega}_m \right)^{1/2}}
\end{align}
When the subnetwork \(P\) in \eqref{eq:linearized-swing-dynamics-perturbed} is the full network \(L\), we expect the perturbation analysis criterion~\eqref{eq:multiscale-analysis-full-criteria} to reduce to the results of \cref{sec:full-network-parametric-forcing}.  When \(P = L\), all cross terms of the form \(\mathbf{v}_m^{\star} P \mathbf{v}_l = \mathbf{v}_m^{\star} L \mathbf{v}_l\) with \(m \ne l\) vanish, since \(\mathbf{v}_m^{\star} L \mathbf{v}_l = {\scriptstyle \Omega}_l^2\  \mathbf{v}_m^{\star} \mathbf{v}_l = 0\).  Thus the only critical frequencies are of the form \(\omega_0 = \abs{{\scriptstyle \Omega}_l + {\scriptstyle \Omega}_l}/n = 2 {\scriptstyle \Omega}_l / n\), for \(n = 1, 2, \dots, {\scriptstyle N}\).  These are the parametric resonance frequencies of \({\scriptstyle N}\) decoupled Mathieu oscillators, which agrees with the diagonalization argument of \cref{sec:full-network-parametric-forcing}.

\section{Stability Analysis for  Tori Graphs}
\label{sec:ring-network-susceptibility-analysis}

As an example, we apply the stability analysis results of \cref{sec:single-edge-parametric-forcing} to ring and tori graphs.  We can obtain analytical expressions or bounds for all the frequencies and tongue widths for parametric resonance on such networks, which gives us some insight into how the stability characteristics vary with increasing graph size and connectivity.

We begin with the one-dimensional torus, i.e. a ring graph. 
With \(\theta := \frac{2 \pi}{N}\), the eigenvalues and eigenvectors of the \({\scriptstyle N}\)-node ring graph are~\cite{bamieh2012Coherence}
\begin{align}
{\scriptstyle \Omega}_m^2 & = 2 ( 1 - \cos(m \theta)), \quad m = 0, \dots, {\scriptstyle N}\text{-}1 \label{eq:ring-graph-eigenvalues}\\
\mathbf{v}_m & = \begin{bmatrix} 1 & e^{j m \theta} & e^{j 2 m \theta} & \dots & e^{j (N\text{-}1) m \theta} \end{bmatrix}^\mathrm{T} \label{eq:ring-graph-eigenvectors}
\end{align}
For the ring graph, we can thus find the critical frequencies~\eqref{eq:basic-instability-criterion-filtered-plus} and \eqref{eq:basic-instability-criterion-filtered-minus} and the slopes of the corresponding Arnold tongues~\eqref{eq:multiscale-analysis-full-criteria} analytically.  The frequency corresponding to the eigenvalue pair \((l,m)\) is
\begin{align}
\omega_{lm} = & {\scriptstyle \Omega}_l + {\scriptstyle \Omega}_m, \quad  (l, m = 1, \dots, {\scriptstyle N} \text{-}1)\nonumber\\
= & 2^{1/2} \left( ( 1 - \cos(l \theta) )^{1/2} + (1 - \cos(m \theta))^{1/2} \right) \label{eq:ring-graph-critical-frequencies}
\end{align}
Since \({\scriptstyle \Omega}_m^2 = {\scriptstyle \Omega}_{N \text{-} m}^2\), all eigenvalues of the Laplacian (except possibly one) have multiplicity \(2\).  The corresponding eigenvectors are the complex conjugates \(\mathbf{v}_m\) and \(\mathbf{v}_{N \text{-} m} = \mathbf{\bar{v}}_m\).

The first-order Arnold tongues in \( \omega, \epsilon \) space are then given by
\begin{align}
\omega(\epsilon) = \omega_{lm} \pm a_{lm} \epsilon + \mathcal{O}(\epsilon^2) \label{eq:ring-graph-arnold-tongue-equation}
\end{align}
where \( \pm a_{lm} \) is the slope of the tongue at critical frequency \( \omega_{lm} \).  \( a_{lm} \) is the maximum attainable value of \({\mathbf{v}_m^{\star} E_{ik} \mathbf{v}_l}/{2 \left( {\scriptstyle \Omega}_l {\scriptstyle \Omega}_m \right)^{1/2}}\) (\cref{sec:multi-scale-method-with-multiplicity}), where \(E_{ik} = e_{ik} e_{ik}^{\star}\) is the single edge (\(i \to k\)) Laplacian corresponding to the parametrically forced edge, and \(\mathbf{v}_m\) and \(\mathbf{v}_l\) range over the corresponding \(2\)-dimensional eigenspaces.  If \(m \ne {\scriptstyle N}/2\) and \(l \ne {\scriptstyle N}/2\),
\begin{align}
\abs{a_{lm}} = & \frac{\left\| \begin{pmatrix} \mathbf{v}_m^{\star} \\ \mathbf{v}_{N \text{-} m}^{\star} \end{pmatrix} E_{ik} \begin{pmatrix} \mathbf{v}_l & \mathbf{v}_{N \text{-} l} \end{pmatrix} \right\|_2 }{2 \left( {\scriptstyle \Omega}_l {\scriptstyle \Omega}_m \right)^{1/2}} \nonumber\\
= & \frac{\left\| \begin{matrix} \mathbf{v}_m^{\star} e_{ik} \\ \mathbf{\bar{v}}_m^{\star} e_{ik} \end{matrix} \right\|_2 \left\| \begin{matrix} e_{ik}^{\star} \mathbf{v}_l & e_{ik}^{\star} \mathbf{\bar{v}}_l \end{matrix} \right\|_2}{2 \left( {\scriptstyle \Omega}_l {\scriptstyle \Omega}_m \right)^{1/2}} \nonumber\\
  = & \frac{\abs{\mathbf{v}_m^{\star} e_{ik}}}{{\scriptstyle \Omega}_l^{1/2}} \frac{\abs{e_{ik}^{\star} \mathbf{v}_l}}{{\scriptstyle \Omega}_m^{1/2}}
= \frac{\abs{\mathbf{v}_m^{\star} E_{ik} \mathbf{v}_l }}{\left( {\scriptstyle \Omega}_l {\scriptstyle \Omega}_m \right)^{1/2}} \label{eq:ring-graph-arnold-tongue-slopes-bound-generic} \\
  = &  \frac{1}{{\scriptstyle N}} \frac{\abs{1 - e^{\text{-}j m \theta}}}{ (2 \text{-} 2 \cos(\text{-}m \theta)) ^{1/4}} \frac{\abs{1 - e^{j l \theta}}}{(2 \text{-} 2 \cos(l \theta))^{1/4}} \label{eq:ring-graph-arnold-tongue-slopes-bound}
\end{align}
If \(m = {\scriptstyle N}/2\) or \(l = {\scriptstyle N}/2\), then the corresponding eigenspace is one-dimensional, and the slope \(\abs{a_{lm}}\) is lower by a factor of \(2^{1/2}\):
\begin{align}
  \abs{a_{lm}} & = \frac{\abs{\mathbf{v}_m^{\star} E_{ik} \mathbf{v}_l}}{\nu \left( {\scriptstyle \Omega}_l {\scriptstyle \Omega}_m \right)^{1/2}}, \text{ with} \label{eq:ring-graph-arnold-tongue-slopes-adjusted-bound}\\
    \nu & = \begin{cases}
      1 & l \ne {\scriptstyle N}/2, m \ne {\scriptstyle N}/2 \\
      2^{1/2} & l = {\scriptstyle N}/2 \text{ or } m = {\scriptstyle N}/2 \\
      2 & l = m = {\scriptstyle N}/2
    \end{cases} \nonumber
\end{align}
Expression~\eqref{eq:ring-graph-arnold-tongue-slopes-bound} is separable and symmetric in \(m\) and \(l\), so we can find the most critical frequencies by maximizing the function \(\delta(t) := \abs{1 - e^{j t}}(2 - 2 \cos(t))^{\text{-}1/4}\), where \(\abs{a_{lm}} = \frac{1}{\nu {\scriptstyle N}} \delta(\text{-}m \theta) \delta(l \theta)\).  This function attains a maximum of \(2^{1/2}\) at \(t = \pi\), corresponding to \(l = m = \lfloor {\scriptstyle N}/2 \rfloor\).  We make the following observations about the stability of the ring graph under parametric excitation of one of its edges:
\begin{itemize}
\item All critical frequencies for parametric instability at small forcing amplitudes are contained in \(\omega \in (0, 4]\) (\cref{eq:ring-graph-eigenvalues}).  The critical frequencies are independent of the choice of parametrically forced edge(s).
\end{itemize}

\begin{itemize}
\item The size of the unstable region in \((\omega, \epsilon)\) space around any critical frequency is inversely proportional to the number of nodes \({\scriptstyle N}\) in the graph.
\item As measured from the vertical line, the widest Arnold tongue in the parameter space \((\omega, \epsilon)\) has slope between \({\scriptstyle 1/N}\) and \(2/{\scriptstyle N}\).
\end{itemize}

\cref{fig:ring-graph-stability-diagrams-analytical} shows the \(\mathcal{O}(\epsilon)\) approximation to the Arnold tongues (\cref{eq:ring-graph-arnold-tongue-equation}) for ring networks of varying sizes.

\begin{figure}[t]
\centering
\includegraphics[width=0.9\columnwidth]{./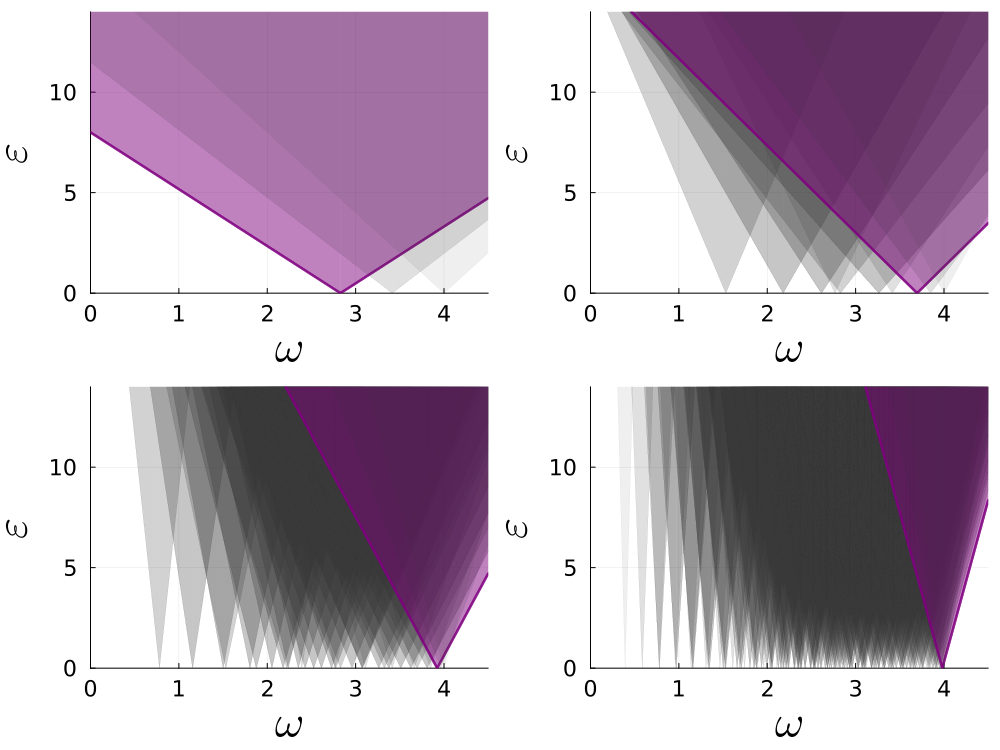}
\caption{\label{fig:ring-graph-stability-diagrams-analytical}Visualization of the stability diagram for ring networks with single edge perturbation.  The four networks considered have (top-left) \({\scriptstyle N} = 4\), (top-right) \(8\), (bottom-left) \(16\) and \(24\) nodes.  All parametric forcing frequencies causing parametric resonance are contained in the interval \((0, 4]\).  The most critical frequency corresponds to the widest Arnold tongue and is colored purple.  For sufficiently large \({\scriptstyle N}\), this tongue is at \(\omega = 4\) or close to it, and the slopes of all tongues as measured from the vertical are inversely proportional to \({\scriptstyle N}\), the number of nodes.}
\end{figure}

\subsection{Ring Networks with more General Forcing}

These analytical results for ring networks can be extended to the case of more general forcing, when more than one edge is forced parametrically at a single frequency \(\omega\).  Let \(P\) be the graph Laplacian corresponding to the parametrically forced edges, so that the full graph Laplacian is \(L + \epsilon P \cos(\omega t)\) (Section~\ref{sec:problem-statement}).  With some abuse of notation, we refer to the graph corresponding to \(P\) as \(P\) as well.  The perturbation analysis is unchanged but for replacing the single edge Laplacian \(E_{ik}\) with \(P\).  As before, with \({\scriptstyle \Omega}_m^2 = {\scriptstyle \Omega}_l^2\) and \(\mathbf{v}_m = \mathbf{\bar{v}}_{N \text{-} m}\)the slopes of the Arnold tongues are given by (Section~\ref{sec:generalization-partial-network-forcing})
\begin{align}
  \abs{a_{lm}} = & \frac{\left\|\begin{pmatrix} \mathbf{v}_m^{\star} \\ \mathbf{v}_{{\scriptscriptstyle N} \text{-}m}^{\star} \end{pmatrix} P \begin{pmatrix} \mathbf{v}_l & \mathbf{v}_{N \text{-} l} \end{pmatrix} \right\|_2}{2 \left( {\scriptstyle \Omega}_l {\scriptstyle \Omega}_m \right)^{1/2}}  \nonumber\\
  = & \frac{1}{2 \left( {\scriptstyle \Omega}_l {\scriptstyle \Omega}_m \right)^{1/2}} \left\| \begin{matrix} \mathbf{v}_m^{\star} P \mathbf{v}_l & \mathbf{v}_m^{\star} P \mathbf{\bar{v}}_l \nonumber\\
  \mathbf{\bar{v}}_m^{\star} P \mathbf{v}_l & \mathbf{\bar{v}}_m^{\star} P \mathbf{\bar{v}}_l \end{matrix} \right\|_2 \\
  = &  \frac{2^{1/2}}{2 \left( {\scriptstyle \Omega}_l {\scriptstyle \Omega}_m \right)^{1/2}} \left\| \begin{matrix}\mathbf{v}_m^{\star} P \mathbf{v}_l \\ \mathbf{\bar{v}}_m^{\star} P \mathbf{v}_l\end{matrix} \right\|_2 \nonumber\\
= & \frac{\left( \abs{\mathbf{v}_m^{\star} P \mathbf{v}_l}^2 + \abs{\mathbf{\bar{v}}_m^{\star} P \mathbf{v}_l}^2 \right)^{1/2}}{\left( 2 {\scriptstyle \Omega}_l {\scriptstyle \Omega}_m \right)^{1/2}}
\end{align}
Accounting for one-dimensional eigenspaces corresponding to \(m = {\scriptstyle N}/2\) or \(l = {\scriptstyle N}/2\), we get
\begin{align}
  \abs{a_{lm}} & = \frac{\left( \abs{\mathbf{v}_m^{\star} P \mathbf{v}_l}^2 + \abs{\mathbf{\bar{v}}_m^{\star} P \mathbf{v}_l}^2 \right)^{1/2}}{\nu \left( 2 {\scriptstyle \Omega}_l {\scriptstyle \Omega}_m \right)^{1/2}} \label{eq:ring-graph-general-forcing-arnold-tongues-adjusted-bound}
\end{align}
where \(\nu\) is the same as in \cref{eq:ring-graph-arnold-tongue-slopes-adjusted-bound}.

Since \(P\) can be written as the sum of single edge graph Laplacians (\(P = \sum\limits_{i \to k} E_{ik}\), where \(i \to k\) are the edges of \(P\) with \(\abs{i - k} = 1\)), we can simplify each term in this expression further:
\begin{align*}
\abs{\mathbf{v}_m^{\star} P \mathbf{v}_l} = & \abs{\sum\limits_{
\substack{
i \to k\\
\abs{i \text{-} k} = 1}
} \mathbf{v}_m^{\star} E_{ik} \mathbf{v}_l} \\
= & \frac{1}{{\scriptstyle N}} \abs{ \left( 1 - e^{j \theta l} \right) \left( 1 - e^{\text{-}j \theta m} \right) \sum\limits_{i \to k} e^{j \theta i (l \text{-} m)}},
\end{align*}
with a similar expression for \(\abs{\mathbf{\bar{v}}_m^{\star} P \mathbf{v}_l}\).  Both expressions attain a maximum of \(4 {\scriptstyle M/N}\) at \(l = m = \left\lfloor N/2 \right\rfloor\), allowing us to draw the following conclusions.
\begin{itemize}
\item As in the single edge case, all critical frequencies for parametric instability at small forcing amplitudes are contained in \(\omega \in (0, 4]\).
\item The size of the unstable region in \((\omega, \epsilon)\) space near any critical frequency is inversely proportional to the number of nodes \({\scriptstyle N}\), and the the widest tongue has slope (as measured from the vertical) between \({\scriptstyle M/N}\) and \(2 {\scriptstyle M/N}\).
\end{itemize}

When the whole network is perturbed (\({\scriptstyle M = N}\)), the analysis of \cref{sec:full-network-parametric-forcing} applies, and the slope of the widest Arnold tongue is \(\pm 1\), a well known result for Mathieu's equation.

\subsection{Higher-Dimensional Tori}

We can study the effect of the connectivity of the ring network on its stability under parametric edge excitation by extending the results \eqref{eq:ring-graph-arnold-tongue-slopes-bound} and  \eqref{eq:ring-graph-general-forcing-arnold-tongues-adjusted-bound} to tori in higher dimensions.  We consider a ring network that is a \(d\)-fold periodic lattice with \({\scriptstyle N}^{d}\) nodes, where each node has degree \(d\).  The Laplacian of this graph is an \({\scriptstyle N}^{d} \times {\scriptstyle N}^{d}\) matrix, whose spectral properties govern the slopes of the Arnold tongues as in \eqref{eq:multiscale-analysis-full-criteria}.  Since this matrix is \(d\)-fold circulant~\cite{bamieh2012Coherence}, its action on a vector \(\mathbf{v}\) is equivalent to a \(d\)-fold circular convolution:
\begin{align*}
L_a \mathbf{v} & = \mathbf{a} \star \mathbf{v} \\
\left( L_a \mathbf{v} \right)_{(p_1, \dots, p_d)} & = \sum\limits_{\substack{q_1, \dots, q_d \in \mathbb{Z}_N}} \mathbf{a}_{(p_1 - q_1), \dots, (p_d - q_d)} \mathbf{v}_{(q_1, \dots, q_d)} \\
\implies \left( L_a \mathbf{v} \right)_{\mathbf{p}} & = \sum\limits_{\mathbf{q} \in \mathbb{Z}_N^d} \mathbf{a}_{\mathbf{p} - \mathbf{q}} \mathbf{v}_{\mathbf{q}}
\end{align*}
where \(L_a\) is the Laplacian and \(\mathbf{a}\) is one of its columns that fully determines \(L_a\).  We index \(\mathbf{a}\) and \(\mathbf{v}\) as multidimensional arrays with \(d\) indices \(\mathbf{p} := \left( p_1, \dots, p_d \right)\) and \(\mathbf{q} := \left( q_1, \dots, q_n \right)\) in \(\mathbb{Z}_N^d\), which makes the convolution (and thus the spectrum of \(L_a\)) convenient to represent.  The eigenvectors of \(L_a\) are the eigenvectors of the \(d\)-fold circular shift operator, and the eigenvalues are the entries of the \(d\)-dimensional DFT of \(\mathbf{a}\).  Indexing the eigenvalues by \((m_1, \dots, m_d) =: \mathbf{m} \in \mathbb{Z}_N^d\), we have
\begin{align}
{\scriptstyle \Omega}^2_{\mathbf{m}}  & = 2 \sum\limits_{i=1}^d \left( 1 \text{-} \cos(m_i \theta) \right)) \label{eq:ring-graph-d-fold-eigenvalues} \\
\mathbf{v}^{(\mathbf{m})}_{(q_1, q_2, \dots, q_d)} & = e^{j \theta \left( m_1 q_1 + \dots + m_d q_d \right)} \nonumber\\
\implies \mathbf{v}^{(\mathbf{m})}_{\mathbf{q}} & = e^{j \theta \mathbf{m}\cdot\mathbf{q}} \label{eq:ring-graph-d-fold-eigenvectors}
\end{align}
where \(\mathbf{v}^{(\mathbf{m})}\) is the eigenvector corresponding to \({\scriptstyle \Omega}_{\mathbf{m}}^2\) and \(\theta = 2 \pi / {\scriptstyle N}\), as before.  The critical frequencies \eqref{eq:basic-instability-criterion-filtered-plus} and \eqref{eq:basic-instability-criterion-filtered-minus} are then
\begin{align}
& \omega_{\mathbf{l} \mathbf{m}} = {\scriptstyle \Omega}_{\mathbf{l}} + {\scriptstyle \Omega}_{\mathbf{m}} \nonumber\\
& = \sqrt{2}  \left[ \left( \sum\limits_{i=1}^d 1 \text{-} \cos(l_i \theta) \right)^{\frac{1}{2}} + \left( \sum\limits_{i=1}^d 1 \text{-} \cos(m_i \theta) \right)^{\frac{1}{2}} \right] \label{eq:ring-graph-d-fold-critical-frequencies}
\end{align}
Because of the \(d\)-fold symmetry in the expression for \({\scriptstyle \Omega}_{\mathbf{m}}^2\) (\(\cos(m_i \theta) = \cos(({\scriptstyle N}\text{-} m_i)\theta)\)), the eigenvalue \({\scriptstyle \Omega}_{\mathbf{m}}^2\) has multiplicity \(2^d\), and the slope of the corresponding Arnold tongue involves the \(2\)-norm of a matrix (\cref{sec:multi-scale-method-with-multiplicity}).  Let \(V_{\mathbf{m}}\) and \(V_{\mathbf{l}}\) be matrices whose columns span the eigenspace corresponding to \({\scriptstyle \Omega}_{\mathbf{m}}^2\) and \({\scriptstyle \Omega}_{\mathbf{l}}^2\) respectively, and \(E_{\mathbf{ik}} \equiv e_{\mathbf{ik}} e_{\mathbf{ik}}^{\star}\) be the single edge Laplacian corresponding to the edge \(\mathbf{i} \leftrightarrow \mathbf{k}\), where \(\mathbf{i}, \mathbf{k} \in \mathbb{Z}_N^d\).  Note that by the topology of the ring lattice, \(\mathbf{i}\) and \(\mathbf{k}\) are identical in all indices except for one, which differs by one.  Without loss of generality, we may assume this is the first index.  The slopes of the Arnold tongues (\cref{eq:multiscale-analysis-full-criteria}) corresponding to perturbations in edge \(\mathbf{i} \to \mathbf{k}\) are then
\begin{align}
\abs{a_{\mathbf{l}\mathbf{m}}} = & \max_{\abs{\mathbf{z}^{(\mathbf{m})}} = \abs{\mathbf{z}^{(\mathbf{l})}} = 1} \frac{\mathbf{z}^{(\mathbf{m}) \star} V_{\mathbf{m}}^{\star} E_{ik} V_{\mathbf{l}} \mathbf{z}^{(\mathbf{l})}}{2 \left( {\scriptstyle \Omega}_\mathbf{l} {\scriptstyle \Omega}_\mathbf{m} \right)^{1/2}}\nonumber\\
  = & \frac{ \|V_{\mathbf{m}}^{\star} E_{ik} V_{\mathbf{l}} \|_2 }{2 \left( {\scriptstyle \Omega}_\mathbf{l} {\scriptstyle \Omega}_\mathbf{m} \right)^{1/2}} \label{eq:ring-graph-d-fold-arnold-tongue-slopes}\\
  \ge & \frac{\abs{\mathbf{v}^{(\mathbf{m})\star} E_{\mathbf{ik}} \mathbf{v}^{(\mathbf{l})}}}{2 \left( {\scriptstyle \Omega}_\mathbf{l} {\scriptstyle \Omega}_\mathbf{m} \right)^{1/2}} = \frac{\abs{(\mathbf{v}^{(\mathbf{m})\star} e_{\mathbf{ik}})(e_{\mathbf{ik}}^{\star} \mathbf{v}^{(\mathbf{l})})}}{2 \left( {\scriptstyle \Omega}_\mathbf{l} {\scriptstyle \Omega}_\mathbf{m} \right)^{1/2}} \label{eq:ring-graph-d-fold-arnold-tongue-slopes-lower-bound-generic} \\
= & \frac{1}{2 {\scriptstyle N}^d} \frac{\abs{e^{j \theta \mathbf{m} \cdot \mathbf{i}}}\abs{1 - e^{\text{-}j m_1 \theta}}}{ {\scriptstyle \Omega}_{\mathbf{m}}^{1/2}} \frac{\abs{e^{j \theta \mathbf{l} \cdot \mathbf{i}}}\abs{1 - e^{j l_1 \theta}}}{{\scriptstyle \Omega}_{\mathbf{l}}^{1/2}} \nonumber\\
= & \frac{1}{2 {\scriptstyle N}^d} \frac{\abs{1 - e^{\text{-}j m_1 \theta}}}{ {\scriptstyle \Omega}_{\mathbf{m}}^{1/2}} \frac{\abs{1 - e^{j l_1 \theta}}}{{\scriptstyle \Omega}_{\mathbf{l}}^{1/2}} \label{eq:ring-graph-d-fold-arnold-tongue-slopes-lower-bound}
\end{align}
This is analogous to expression \eqref{eq:ring-graph-arnold-tongue-slopes-bound} for the one-dimensional ring, except in that we derive a lower bound on the slope of the widest Arnold tongue instead of an exact estimate.  This lower bound is inversely proportional to the total number of nodes (\({\scriptstyle N}^{d}\) here), and attains its maximum of \(1/{\scriptstyle N}^d\) when \(m_1 = l_1 = \left\lfloor {\scriptstyle N}/2 \right\rfloor\), \(m_q = l_q = 0,\ q = 2, \dots, d\).  With the number of nodes fixed, this behavior is identical to the one-dimensional case and independent of the dimension of the lattice.  All critical frequencies are contained in \(\omega \in (0, 4 d^{1/2} ]\), and the widest Arnold tongue in the parameter space \((\omega, \epsilon)\) thus has slope larger than \(1/{\scriptstyle N}^d\).

\cref{fig:ring-graph-stability-diagrams-analytical-2d-comparison} shows the \(\mathcal{O}(\epsilon)\) approximation to the Arnold tongues for ring lattices of various sizes with \(d = 2\).  As a result of the eigenvalue multiplicity, there are fewer critical frequencies, and the corresponding Arnold tongues are wider than for the one-dimensional ring graph with the same number of nodes.

\begin{figure}[t]
\centering
\includegraphics[width=0.9\columnwidth]{./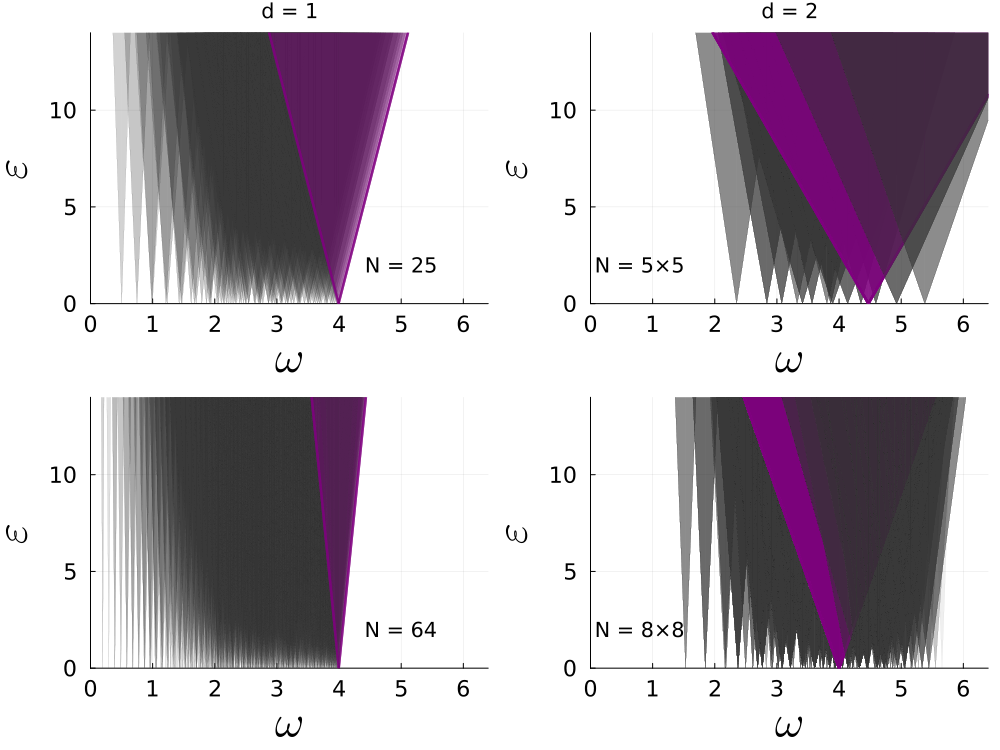}
\caption{\label{fig:ring-graph-stability-diagrams-analytical-2d-comparison}Comparison of the stability diagram for ring lattices of dimension \(1\) (left) and \(2\) (right).  The top row is for graphs with 25 nodes, the bottom row for graphs with 64.  The widest Arnold tongue in each case is colored purple.  The Laplacian for a two-dimensional lattice has much higher eigenvalue multiplicity, the effect of which is to produce fewer critical frequencies and wider Arnold tongues.  (Compare with \cref{fig:ring-graph-stability-diagrams-analytical-2d-comparison-asymmetric}.)}
\end{figure}
\subsection{The Effect of Connectivity}

Narrower Arnold tongues are harder to observe in physical systems, especially in the presence of damping.  Thus \cref{eq:ring-graph-d-fold-arnold-tongue-slopes} suggests that the increasing the connectivity of the graph has a destabilizing effect, as the tongues are wider (\cref{fig:ring-graph-stability-diagrams-analytical-2d-comparison}).  However, this is a consequence of the effect of the eigenvalue multiplicity, a non-generic factor, and not the increased connectivity of the graph.  We investigate this by perturbing the edge weights of the ring graph slightly, which makes the eigenvalues distinct in the generic case.  This turns the lower bound calculations for the slope of the widest tongues (\cref{eq:ring-graph-arnold-tongue-slopes-bound-generic} and \cref{eq:ring-graph-d-fold-arnold-tongue-slopes-lower-bound-generic}) into equalities, and ring graph connectivity has little to no effect on the distribution of tongue slopes.  \cref{fig:ring-graph-stability-diagrams-analytical-2d-comparison-asymmetric} compares the stability diagrams for one and two-dimensional ring graphs with the same number of nodes but with the effect of eigenvalue multiplicity removed.

\begin{figure}[t]
\centering
\includegraphics[width=0.9\columnwidth]{./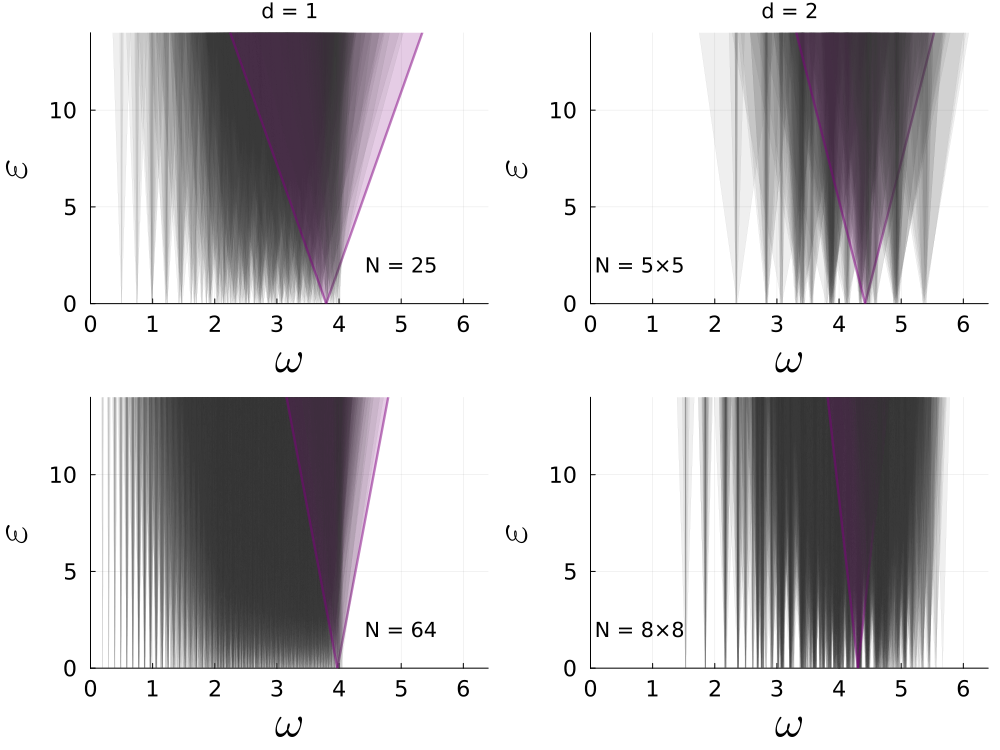}
\caption{\label{fig:ring-graph-stability-diagrams-analytical-2d-comparison-asymmetric}Comparison of the stability diagram (to \(\mathcal{O}(\epsilon)\)) for ring lattices of dimension \(1\) (left) and \(2\) (right) without eigenvalue multiplicity.  The top row is for graphs with 25 nodes, the bottom row for graphs with 64, and the nominal edge weights of \(1\) are perturbed by a normal variable with standard deviation \(0.01\), which makes the Laplacian eigenspaces one-dimensional in the generic case.  The widest Arnold tongue in each case is colored purple.  With the effects of eigenvalue multiplicity removed, the slopes of the widest tongues for the two cases are very close to each other, implying a negligible effect of graph connectivity on the stability of the dynamics.  The widest tongues are no longer guaranteed to be near \(\omega = 4\), since the graph Laplacians are not circulant.  Compare with \cref{fig:ring-graph-stability-diagrams-analytical-2d-comparison}.}
\end{figure}

\cref{ring-graph-asymmetric-1d-vs-2d-all-slopes} compares all the slopes and the number of first-order Arnold tongues for one and two-dimensional ring graphs with perturbed edge weights and various graph sizes.  The increased connectivity of the \(2\)-dimensional lattice does not affect the stability of the dynamics.

\begin{figure}[t]
\centering
\includegraphics[width=0.9\columnwidth]{./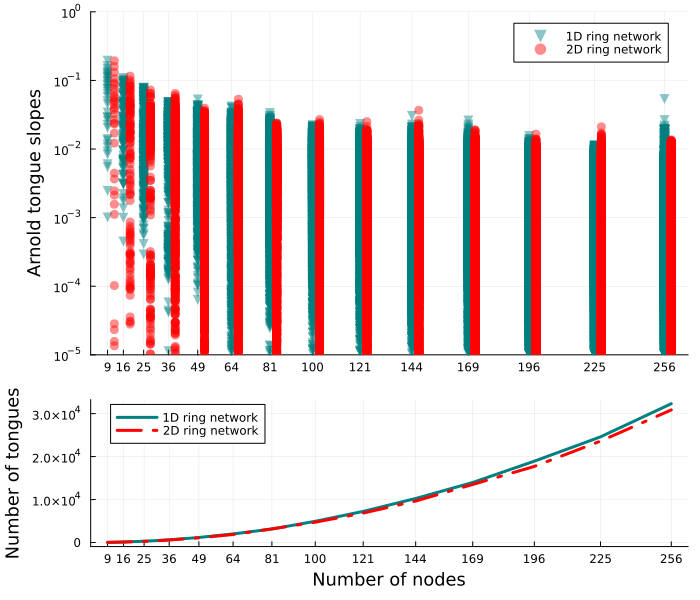}
\caption{\label{ring-graph-asymmetric-1d-vs-2d-all-slopes}(Above) Slopes of all Arnold tongues for one and two-dimensional ring graphs of the same size with perturbed edge weights, as a function of the size of the network.  The two are offset slightly for legibility.  (Below) The number of Arnold tongues for both networks as a function of the network size.  There is no significant difference in the distribution of slopes or the number of distinct Arnold tongues between the two kinds of graphs.}
\end{figure}
\section{Conclusion and Further Questions}
\label{sec:concl-future-work}

We draw the following tentative conclusions about parametric resonance via periodic edge perturbations in networked oscillators:
\begin{itemize}
\item The stability characteristics of parametric resonance via edge perturbations of linear oscillator dynamics on undirected graphs can be inferred from the spectrum of the graph Laplacian.  The critical parametric forcing frequencies are sums of pairs of the natural frequencies of the graph, which are themselves the square roots of the Laplacian eigenvalues.
\item Each critical frequency is thus associated with two eigenvalues.  The slope or width of the Arnold tongue structure in parameter space originating at that frequency is a function of the corresponding eigenvectors.  It is large when a corresponding eigenvector varies significantly between the nodes connecting the perturbed edges, or when the eigenvalues have high multiplicity.  If either eigenvector has the same value on these nodes, there is no Arnold tongue at the corresponding frequency and the dynamics remains stable.
\item The Arnold tongue slopes are generally proportional to the number of edges being perturbed, and inversely proportional to the size of the network.  They do not appear to depend strongly on the degree of network connectivity.
\end{itemize}

While we choose parametric forcing at a single frequency to illustrate the method, these results extend to more general periodic forcing.  The perturbation analysis treats the forcing as additive input to a linear system, so each harmonic of the forcing period acts as a separate input.  The system thus experiences parametric resonance if any harmonic of the forcing frequency equals the sum of two oscillator natural frequencies.

Parametric resonance in real-world oscillator networks differs from the models considered in this work because of damping.  In parametric oscillators the effect of damping on the stability diagrams is to ``raise'' the Arnold tongues (\cref{fig:mathieu-instability-rescaled}), an effect more pronounced in narrower or higher-order tongues.  Thus only the resonances corresponding to the wider and lower-order tongues are observable in physical systems.  Approximating these realizable tongue positions and widths for the purposes of design, to avoid or excite parametric resonance, can be a computationally intensive undertaking for large networks.  The multiple time-scale perturbation analysis provides a simple criterion for the same that depends only on knowledge of the graph Laplacian.

A natural extension of the analysis presented here is thus to consider the effect of damping.  In its current form, the perturbation analysis depends on a diagonalization argument and can be readily extended to cover only very specific cases of damping: the damping coefficient matrix must commute with the graph Laplacian.  This is the case, for example, when each oscillator experiences ``self-damping'' with the same damping coefficient.  A new approach is required for a more general treatment.

It would be useful to know if the multiple scale perturbation method can predict the unstable modes in addition to the positions and slopes of the Arnold tongues.  Preliminary analysis suggests that this is the case.  To first order in the forcing strength, the position and slope of each Arnold tongue depends on two modes (eigenvalues and eigenvectors) of the graph Laplacian, and the unstable slow-time dynamics in these tongues is a combination of these modes as well (see \cref{sec:multi-scale-method-unstable-modes}).  \cref{fig:unstable-response-comparison-consolidated} compares, for a ring graph, the actual unstable mode obtained via numerical computation of the Monodromy map with the corresponding modes of the Laplacian used in the perturbation analysis.  We observe a qualitative similarity between the actual and predicted mode shapes.

\begin{figure}[t]
\centering
\includegraphics[width=0.9\columnwidth]{./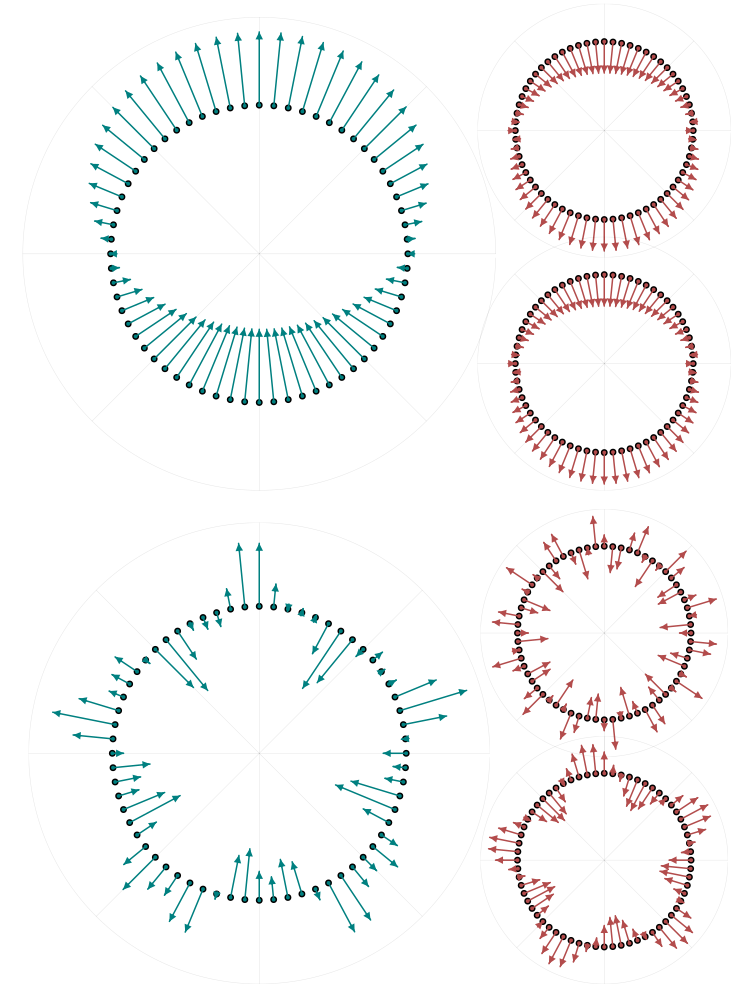}
\caption{\label{fig:unstable-response-comparison-consolidated}Comparison of the unstable mode shapes as predicted by the numerically computed Monodromy map (left, large circle) and the multiscale perturbation analysis (right, small circles) for a \(64\) node ring graph with one time-periodically varying edge weight, at two critical parametric forcing frequencies (\cref{eq:multiscale-analysis-full-criteria}).  The effect of eigenvalue multiplicity has been removed by perturbing the nominal edge weights slightly, and the value of the mode at each node is plotted radially.  Above and below: this comparison is shown at two critical parametric forcing frequencies.  At both frequencies, the mode shape as predicted by the perturbation analysis appears to be in the span of the two eigenvectors of the Laplacian corresponding to the critical frequency (\cref{sec:multi-scale-method-unstable-modes}).  We observe that the true unstable modes are visually similar to one or a combination of the two Laplacian eigenvectors.}
\end{figure}

\bibliographystyle{IEEEtran}
\bibliography{./biblio}

\appendix
\section{Appendix}
\subsection{Vector-valued multiple scale perturbation analysis}
\label{sec:vector-valued-multiple-scale-perturbation-analysis}

When one of the conditions \eqref{eq:basic-instability-criterion-filtered-plus} or \eqref{eq:basic-instability-criterion-filtered-minus} is met, we investigate the behavior of the system near a critical \(\omega\).

To facilitate the multiple time-scale analysis, we rescale time so that the parametric forcing frequency \(\omega\) is uniformly \(1\).  Under the transformation \(\eta \leftarrow \omega t\), we get
\begin{align*}
\frac{\mathrm{d} }{\mathrm{d} t} & = \omega \frac{\mathrm{d} }{\mathrm{d} \eta} \\
\frac{\mathrm{d}^2}{\mathrm{d} t^2} & = \omega^2 \frac{\mathrm{d}^2 }{\mathrm{d} \eta^2}
\end{align*}
so that system \eqref{eq:linearized-swing-dynamics-perturbed-one-edge} becomes
\begin{align}
\ddot{\phi}(t) & + \left( L + \epsilon E_{ik} \cos(\omega t) \right) \phi(t) = 0 \nonumber\\
 & \updownarrow \nonumber\\
\phi''(\eta) & + \frac{1}{\omega^2} \left( L + \epsilon E_{ik} \cos(t) \right) \phi(\eta) = 0
\end{align}
Defining \(\kappa := \frac{1}{\omega^2}\), we get
\begin{align}
\phi''(\eta) + \left( \kappa L + \kappa \epsilon E_{ik} \cos(t) \right) \phi(\eta) = 0 
\end{align}
For certain values of \(\kappa\), this system is unstable as \(\epsilon \to 0\).  From \eqref{eq:basic-instability-criterion}, this is when
\begin{align}
\pm j \sqrt{\kappa} {\scriptstyle \Omega}_l = \pm j \sqrt{\kappa} {\scriptstyle \Omega}_k \pm j \nonumber\\
\implies \sqrt{\kappa} \abs{{\scriptstyle \Omega}_l +  {\scriptstyle \Omega}_k} = 1  \label{eq:basic-instability-criterion-rescaled-filtered-plus}\\
\implies \sqrt{\kappa} \abs{{\scriptstyle \Omega}_l -  {\scriptstyle \Omega}_k} = 1  \label{eq:basic-instability-criterion-rescaled-filtered-minus}
\end{align}
for some \(l, k  = 0, 1, \dots, n\).  Before proceeding further, we switch our notation back to \(t\) from \(\eta\).  Our rescaled system is thus
\begin{align}
\label{eq:linearized-swing-dynamics-perturbed-one-edge-rescaled}
\ddot{\phi}(t) + \kappa \left( L + \epsilon E_{ik} \cos(t) \right) \phi(t) = 0
\end{align}
This system exhibits the instabilities studied in the previous section for specific critical values of \(\kappa\) (since \(\kappa = \frac{1}{\omega^2}\)).  We investigate the stability of the system (for small \(\epsilon\)) in the vicinity of these \(\kappa\).  Specifically, we are interested in the behavior of the system \eqref{eq:linearized-swing-dynamics-perturbed-one-edge-rescaled} along curves in \(\kappa, \epsilon\) space: 
\begin{equation}
\label{eq:test-curve-in-ke-space}
\kappa(\epsilon) = \kappa_0 + \kappa_1 \epsilon + \kappa_2 \epsilon^2 + \mathcal{O}(\epsilon^3) 
\end{equation}
This formulation allows us to estimate not only the critical values of \(\kappa_0\) for instability, but also the slopes (or widths) of the corresponding Arnold tongues \(\kappa_1\).  Figure \ref{fig:testcurve_diagram} shows visualizations of  \(\kappa(\epsilon)\).

\begin{figure}[htb]
\centering
\includegraphics[width=0.8\columnwidth]{./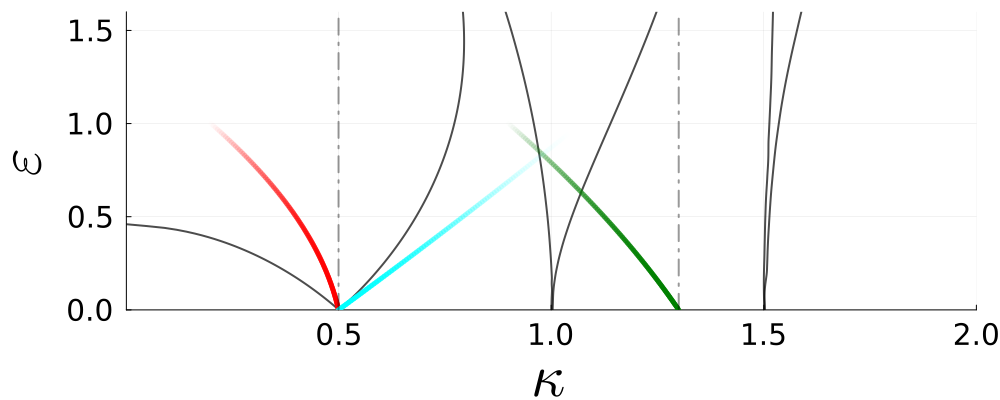}
\caption{\label{fig:testcurve_diagram}Examples of test curves \(\kappa(\epsilon)\) used for multiple time-scale perturbation analysis of system \eqref{eq:linearized-swing-dynamics-perturbed-one-edge-rescaled}, overlaid on top of a stability diagram for a linear time-periodic oscillator in \((\kappa, \epsilon)\) space.  In gray are the Arnold tongues for the system.  The test curves are represented in the diagram as the colored streaks, and algebraically using Equation \eqref{eq:test-curve-in-ke-space}.  We expect the stability of the system at points on the test curves for small \(\epsilon\) to depend on both the \(\kappa\)-intercept \(\kappa_0\) and the slope \(\kappa_1\) measured from the vertical.  Along the green curve (with \(\kappa_0 = 1.3\)), the system is stable for all possible slopes \(\kappa_1\).  For the curves with \(\kappa_0 = 0.5\), the stability of the system depends on the slope \(\kappa_1\) -- the system is unstable at all points on the red curve and stable at all points on the cyan curve.  The objective of the multiple time-scale analysis is to estimate both the critical points \(\kappa_0\) and the critical slopes \(\kappa_1\), or their equivalents in the original scaling in the original \((\omega, \epsilon)\) space (Equation \eqref{eq:linearized-swing-dynamics-perturbed-one-edge}).}
\end{figure}

We treat the configuration of the system \(\phi(t)\) as a function of two time scales, a fast time \(t\) and a slow time scale \(\tau := \epsilon t\).  This allows us to study the effect of the slow time behavior of the system on the fast time dynamics.  Specifically, we set
\begin{equation}
\label{eq:two-time-scale-formulation}
\Phi(t,\tau; \epsilon) := \phi(t;\epsilon)
\end{equation}

Note that it is always possible to find a \(\Phi(t, \tau; \epsilon)\) that satisfies this relation -- the trivial case is when \(\Phi\) does not depend on its second argument \(\tau\).  Since this decomposition is non-unique, \(\Phi(t, \tau; \epsilon)\) is not well-defined yet, and will be uniquely determined in the following analysis.

We can now expand \(\Phi(t, \tau; \epsilon)\) in powers of \(\epsilon\):
\begin{align}
\label{eq:swing-dynamics-2nd-order-regular-perturbation-multiscale}
\Phi(t, \tau; \epsilon) = \sum\limits_{n=0}^N \epsilon^{n} \Phi_n(t, \tau) + \mathcal{O}(\epsilon^{N+1})
\end{align}

We relate the derivatives as follows:
\begin{align*}
\frac{\mathrm{d} \phi}{\mathrm{d} t}(t; \epsilon) & = \frac{\mathrm{d} }{\mathrm{d} t} \Phi(t, \tau; \epsilon) = \frac{\partial \Phi}{\partial t} + \frac{\partial \Phi}{\partial \tau} \frac{\mathrm{d} \tau}{\mathrm{d} t} \\
&  = \frac{\partial \Phi}{\partial t} + \epsilon \frac{\partial \Phi}{\partial \tau} \\
\frac{\mathrm{d}^2 \phi}{\mathrm{d}t^2}(t; \epsilon) & = \frac{\partial^2 \Phi}{\partial t^2} + 2 \epsilon \frac{\partial \Phi}{\partial t} \frac{\partial \Phi}{\partial \tau} + \frac{\partial^2 \Phi}{\partial \tau^2}
\end{align*}

or, in more compact notation
\begin{align*}
\partial_1 (\cdot) & := \frac{\partial (\cdot)}{\partial t},\ \partial_2 (\cdot) := \frac{\partial (\cdot)}{\partial \tau} \\
\implies \frac{\mathrm{d} }{\mathrm{d} t} \phi  & = \partial_1 \Phi + \epsilon \partial_2 \Phi \\
\frac{\mathrm{d}^2 \phi}{\mathrm{d}t^2} & = \partial_1^2 \Phi + 2 \epsilon \partial_1 \partial_2 \Phi + \partial_2^2 \Phi
\end{align*}

Together, \eqref{eq:linearized-swing-dynamics-perturbed-one-edge-rescaled} is transformed to
\begin{align}
\label{eq:linearized-swing-dynamics-perturbed-one-edge-transformed}
  \partial_1^2 \Phi + (\kappa_0 + \kappa_1 \epsilon + \dots) & (L + E_{ik} \cos(t)) \Phi = \nonumber\\
  & - 2 \epsilon \partial_1 \partial_2 \Phi - \epsilon^2 \partial_2^2 \Phi
\end{align}
Substituting for \(\kappa\) from equation \eqref{eq:test-curve-in-ke-space}, for \(\Phi(t, \tau; \epsilon)\) from the series expansion \eqref{eq:swing-dynamics-2nd-order-regular-perturbation-multiscale} and collecting powers of \(\epsilon\), we get to \(\mathcal{O}(\epsilon)\):
\begin{small}
\begin{align}
\mathcal{O}(\epsilon^0): \quad  & \partial_1^2 \Phi_0(t,\tau) + \kappa_0 L \Phi_0(t,\tau) = 0 \label{eq:multiscale-system-Oeps0}\\
\mathcal{O}(\epsilon^1): \quad  & \partial_1^2 \Phi_1(t,\tau) + \kappa_0 L \Phi_1(t,\tau) = \nonumber\\
& \quad \text{-} \left( \kappa_1 L + \kappa_0 E_{ik} \cos(t) + 2 \partial_1 \partial_2 \right) \Phi_0(t,\tau) \label{eq:multiscale-system-Oeps1}
\end{align}
\end{small}

As in the regular perturbation analysis \eqref{eq:swing-dynamics-2nd-order-regular-perturbation-Oeps0-sol}, the \(\mathcal{O}(\epsilon^{0})\) system is a linear, unforced harmonic oscillator whose solution is a matrix exponential:
\begin{equation}
\label{eq:swing-dynamics-1st-order-multiple-scale-sol-Oeps1}
\Phi_0(t, \tau) = V \left( e^{j \sqrt{\kappa_0} \Omega t} \mathbf{c}(\tau) + e^{\text{-}j \sqrt{\kappa_0 } \Omega t} \mathbf{d}(\tau) \right)
\end{equation}
However, the ``initial conditions'' \(\mathbf{c}(\tau)\) and \(\mathbf{d}(\tau)\) are functions of the slow time variable \(\tau\) and yet to be determined.

To analyze the \(\mathcal{O}(\epsilon^{1})\) system, it will be useful to work in the basis of the eigenvectors of \(L\).  With \(L V = V \Omega^2\) and \(V^{\star} = V^{\text{-}1}\),
\begin{align}
  \label{eq:multiple-scale-sol-Oeps1-spectral}
  V^{\star} \Phi_0(t,\tau) & = e^{j \sqrt{\kappa_0} \Omega t} \mathbf{c}(\tau) + e^{\text{-}j \sqrt{k_0}\Omega t} \mathbf{d}(\tau) 
\end{align}
and the \(\mathcal{O}(\epsilon^1)\) system is
\begin{align}
\partial_1^2 (V^{\star} \Phi_1(t,\tau)) + & \kappa_0 V^{\star} L V (V^{\star} \Phi_1(t,\tau)) = \nonumber\\
& \text{-} V^{\star} (\kappa_1 L + \kappa_0  E_{ik} \cos(t) + 2 \partial_1 \partial_2) V (V^{\star} \Phi_0).
\label{eq:swing-dynamics-2nd-order-multiple-scale-Oeps1-eigenbasis}
\end{align}
Substituting for \(V^{\star} \Phi_0\) from \eqref{eq:multiple-scale-sol-Oeps1-spectral}, the additive forcing term becomes
\begin{align}
& - V^{\star} \left( \kappa_1 L + \kappa_0 E_{ik} \cos(t) + 2 \partial_1 \partial_2 \right) V \left( V^{\star} \Phi_0  \right) \nonumber\\
= & - \left( \kappa_1 \Omega^2 + \kappa_0 V^{\star} E_{ik} V \cos(t) + 2 \partial_1 \partial_2 I \right) \times \nonumber\\
& \quad (e^{j \sqrt{\kappa_0} \Omega t} \mathbf{c}(\tau) + e^{\text{-}j \sqrt{\kappa_0}\Omega t} \mathbf{d}(\tau)) \nonumber\\
= & - e^{j \sqrt{\kappa_0}\Omega t}\left( \kappa_1 \Omega^2\, \mathbf{c}(\tau) + 2 j \sqrt{\kappa_0} \Omega\, \mathbf{c}'(\tau) \right) \nonumber\\
& - e^{\text{-}j \sqrt{\kappa_0 } \Omega t} \left( \kappa_1 \Omega^2\, \mathbf{d}(\tau) - 2 j \sqrt{\kappa_0 } \Omega\, \mathbf{d}'(\tau) \right) \nonumber\\
& - \frac{\kappa_0}{2} V^{\star} E_{ik} V \left( e^{j (\sqrt{\kappa_0 }\Omega + I)t} + e^{(\sqrt{\kappa_0 }\Omega - I) t} \right) \mathbf{c}(\tau) \nonumber\\
& + \frac{\kappa_0}{2} V^{\star} E_{ik} V \left( e^{\text{-}j (\sqrt{\kappa_0 }\Omega + I)t} + e^{\text{-}j (\sqrt{\kappa_0 }\Omega - I) t} \right) \mathbf{d}(\tau)
\end{align}
where we have used the facts that
\begin{itemize}
\item \(e^{j \sqrt{\kappa_0} \Omega t}\) commutes with \(\kappa_1 \Omega^2\) by the properties of the matrix exponential, and
\item \(\cos(t) = \frac{1}{2} \left( e^{j t} + e^{\text{-} j t} \right)\).
\end{itemize}

For stability, we require both the fast and slow time dynamics to be stable.  The former requires choices of \(\mathbf{c}(\tau)\) and \(\mathbf{d}(\tau)\) for which the forcing terms that cause harmonic resonance in system \eqref{eq:multiscale-system-Oeps1} (and are \emph{secular}) vanish.  This uniquely determines \(\Phi_0(0, \tau)\).  The latter imposes conditions on \(\mathbf{c}(\tau)\) and \(\mathbf{d}(\tau)\), and in turn on the system parameters \(\omega\) and \(\epsilon\).

We make the following observations:
\begin{enumerate}
\item Every term involving \(e^{j \sqrt{\kappa_0} \Omega t}\) or \(e^{\text{-}j \sqrt{\kappa_0} \Omega t}\) is secular, since these are modes of the unforced system.
\item If \(\kappa_0\) corresponds to \(\omega\) at the base of an Arnold tongue (\emph{i.e.} is ``critical''), then \(j \sqrt{\kappa_0} {\scriptstyle \Omega}_l = \pm j \sqrt{\kappa_0} {\scriptstyle \Omega}_m \pm 1\) for some \(l, m\).  Then the corresponding term involving \(\exp \left( j \left( \pm \sqrt{\kappa_0} {\scriptstyle \Omega}_m \pm 1 \right) t \right)\) is secular.
\end{enumerate}

We first consider the simpler case, \emph{i.e} of non-critical \(\kappa_0\).
\subsubsection{In the stable region}
\label{sec:multiple-scale-analysis-stable-region}

When \(\kappa_0\) is not critical, all forcing terms causing harmonic resonance in system \eqref{eq:multiscale-system-Oeps1} are of the form
\begin{align*}
& {\scriptstyle \Omega}_l \left[ c_l'(\tau) - \frac{j}{2} \frac{\kappa_1}{\kappa_0} \kappa_0^{\frac{1}{2}} {\scriptstyle \Omega}_l c_l(\tau) \right] \exp \left( j \kappa_0^{\frac{1}{2}} {\scriptstyle \Omega}_l t \right) \\
& {\scriptstyle \Omega}_l \left[ d_l'(\tau) + \frac{j}{2} \frac{\kappa_1}{\kappa_0} \kappa_0^{\frac{1}{2}} {\scriptstyle \Omega}_l d_l(\tau) \right] \exp \left( - j \kappa_0^{\frac{1}{2}} {\scriptstyle \Omega}_l t \right)
\end{align*}
where \(c_l\) is the \(l^{\mathrm{th}}\) component of \(\mathbf{c}(\tau)\). There are \(n\) such components of \(\mathbf{c}\) and \(\mathbf{d}\), corresponding to the exponentials of \(\pm j  \kappa_0^{\frac{1}{2}} {\scriptstyle \Omega}_l t\), \(l = 1, 2, \dots, n\).  By assumption, none of the terms involving \(\exp \left( j \left( \pm \sqrt{ \kappa_0 } {\scriptstyle \Omega}_l \pm 1 \right) t \right)\) are secular.

If we pick \(\mathbf{c}(\tau)\) and \(\mathbf{d}(\tau)\) such that these (independently) vanish, then \(\Phi_1(t, \tau)\) is stable in \(t\), and we fully determine \(\Phi_0(t, \tau)\) in the process.  In the parlance of perturbation theory, we eliminate the secular terms in equation \eqref{eq:multiscale-system-Oeps1}.

The above terms vanish when \(\mathbf{c}(\tau)\) and \(\mathbf{d}(\tau)\) satisfy the ODEs
\begin{align}
& \mathbf{c}'(\tau) - j \frac{\kappa_1}{2 \kappa_0} \kappa_0^{\frac{1}{2}} \Omega \mathbf{c}(\tau) = 0 \nonumber\\
& \mathbf{d}'(\tau) + j \frac{\kappa_1}{2 \kappa_0} \kappa_0^{\frac{1}{2}} \Omega \mathbf{d}(\tau) = 0 \nonumber\\
\implies & \mathbf{c}'(\tau) = j \frac{\kappa_1}{2 \kappa_0^{\frac{1}{2}}} \Omega \mathbf{c}(\tau)\\
& \mathbf{d}'(\tau) = - j \frac{\kappa_1}{2 \kappa_0^{\frac{1}{2}}} \Omega \mathbf{d}(\tau)  \label{eq:secular-term-elimination-stable-region-ODE-1}
\end{align}
The solution to this is
\begin{flalign*}
\mathbf{c}(\tau) = \exp{ \left( j \tfrac{\kappa_1}{2 \kappa_0^{\frac{1}{2}}} \Omega \tau \right) } \mathbf{c}(0), & \quad \mathbf{d}(\tau) = \exp{ \left( \text{-} j \tfrac{\kappa_1}{2 \kappa_0^{\frac{1}{2}}} \Omega \tau \right) } \mathbf{d}(0)
\end{flalign*}
This choice of \(\mathbf{c}(\tau)\) and \(\mathbf{d}(\tau)\) makes \(\Phi_1(t,\tau)\) stable.  Since \(\Omega\) is diagonal with non-negative entries, \(\mathbf{c}(\tau)\) and \(\mathbf{d}(\tau)\) are bounded, and \(\Phi_0(t, \tau)\) is stable as well.  This is the case for all \(\kappa_1\), \emph{i.e.} irrespective of our choice of test curve in \((\kappa, \epsilon)\) space.  This is in line with our expectations from the analysis of section \ref{sec:regular-perturbation-analysis}, since every critical \(\kappa_0\) should satisfy one of the conditions \eqref{eq:basic-instability-criterion-rescaled-filtered-plus} or \eqref{eq:basic-instability-criterion-rescaled-filtered-minus}.  Our test curve \(\kappa(\epsilon)\) corresponds to the green curve in figure \ref{fig:testcurve_diagram}.  When starting from a non-critical \(\kappa_0\), the system remains stable along all curves in the \((\epsilon, \kappa)\) parameter space (and thus in the \((\epsilon, \omega)\) domain) as \(\epsilon \to 0\).

Further, \(\Phi_0(t, \tau)\) is given by
\begin{align}
V^{\star} \left( e^{j \sqrt{\kappa_0 }\Omega \left( 1 + \epsilon \frac{\kappa_1}{2 \kappa_0} \right) t } \mathbf{c}(0) + e^{-j \sqrt{\kappa_0 }\Omega \left( 1 + \epsilon \frac{\kappa_1 }{2 \kappa_0 } \right) t} \mathbf{d}(0) \right)
\label{eq:secular-term-elimination-stable-region-ODE-2}
\end{align}
which is a sinusoidal response at a perturbed frequency.  The change in the frequency is proportional to \(\epsilon\) and the slope \(\kappa_1\).
\subsubsection{At critical frequencies \(\kappa_0\), \(\sqrt{\kappa_0} \abs{{\scriptstyle \Omega}_l \pm {\scriptstyle \Omega}_m} = 1\)}
\label{sec:multiple-scale-analysis-unstable-region-choice-1}

When \(\kappa_0\) is critical, \emph{i.e.} when either \eqref{eq:basic-instability-criterion-rescaled-filtered-plus} or \eqref{eq:basic-instability-criterion-rescaled-filtered-minus} is true, forcing terms in the \(\mathcal{O}(\epsilon)\) dynamics \eqref{eq:multiscale-system-Oeps1} involving \(\exp \left( j \left( \sqrt{\kappa_0} {\scriptstyle \Omega}_l \pm 1 \right) t \right)\) also cause harmonic resonance and are secular.  Suppose one of the relations \(\sqrt{\kappa_0} \abs{ {\scriptstyle \Omega}_l \pm {\scriptstyle \Omega}_m } = 1\) holds for a given pair of indices \(l\) and \(m\).  Restricting our attention to equations in  \eqref{eq:swing-dynamics-2nd-order-multiple-scale-Oeps1-eigenbasis} with forcing terms involving the frequencies \(\sqrt{\kappa_0} {\scriptstyle \Omega}_l\) and \(\sqrt{\kappa_0} {\scriptstyle \Omega}_m\), we get
\begin{align}
& \partial_1^2 \left( V^{\star} \Phi_0(t,\tau)  \right)_m + \kappa_0 {\scriptstyle \Omega}_m^2 \left( V^{\star} \Phi_0(t,\tau) \right)_m = \nonumber\\
& - e^{j \sqrt{\kappa_0}\Omega_m t}\left[ \kappa_1 {\scriptstyle \Omega}^2_m + 2 j \sqrt{\kappa_0} {\scriptstyle \Omega}_m c_m'(\tau) \right] \nonumber\\
& - \tfrac{1}{2} \kappa_0 \mathbf{v}^{\star}_m E_{ik} \mathbf{v}_l c_l(\tau) \nonumber\\
& \times \left( e^{j (\sqrt{\kappa_0 }\Omega_l + 1)t} + e^{j (\sqrt{\kappa_0 }\Omega_l \text{-} 1)t} \right) \label{eq:swing-dynamics-2nd-order-multiple-scale-Oeps1-two-modes-1}\\
& \partial_1^2 \left( V^{\star} \Phi_0(t,\tau)  \right)_l + \kappa_0 {\scriptstyle \Omega}_l^2 \left( V^{\star} \Phi_0(t,\tau) \right)_l = \nonumber\\
& - e^{j \sqrt{\kappa_0}\Omega_l t}\left[ \kappa_1 {\scriptstyle \Omega}^2_l + 2 j \sqrt{\kappa_0} {\scriptstyle \Omega}_l c_l'(\tau) \right] \nonumber\\
& - \tfrac{1}{2} \kappa_0 \mathbf{v}^{\star}_l E_{ik} \mathbf{v}_m c_m(\tau) \nonumber\\
& \times \left( e^{j (\sqrt{\kappa_0 }\Omega_m + 1)t} + e^{j (\sqrt{\kappa_0 }\Omega_m \text{-} 1)t} \right) \label{eq:swing-dynamics-2nd-order-multiple-scale-Oeps1-two-modes-2}
\end{align}

with similar terms involving \(\mathbf{d}(\tau)\) and the frequencies \(\text{-} \sqrt{\kappa_0} {\scriptstyle \Omega}_l\) and \(\text{-} \sqrt{\kappa_0} {\scriptstyle \Omega}_m\).  Here we have assumed that the eigenspaces corresponding to eigenvalues \({\scriptstyle \Omega}_l\) and \({\scriptstyle \Omega}_m\) are \(1\)-dimensional.  The case of eigenvalue multiplicity is handled in \cref{sec:multi-scale-method-with-multiplicity}.

To find conditions on \(c_l(\tau)\) and \(c_m(\tau)\) such that \(\Phi_1(t, \tau)\) is stable in \(t\), we need to collect terms involving like exponentials.  This organization differs between the two cases  \(\sqrt{\kappa_0} \abs{{\scriptstyle \Omega}_l \pm {\scriptstyle \Omega}_m} = 1\).
\paragraph{Case I: \(\sqrt{\kappa_0} \abs{{\scriptstyle \Omega}_l -  {\scriptstyle \Omega}_m} = 1\)}

This condition applies when the difference between the square roots of eigenvalues of the Laplacian \(L\) equals the original parametric forcing frequency \(\omega\), since \(\abs{{\scriptstyle \Omega}_l - {\scriptstyle \Omega}_m} = 1 / \sqrt{\kappa_0} = \omega\).

Without loss of generality, we can pick \({\scriptstyle \Omega}_l \ge {\scriptstyle \Omega}_m\), so that
\begin{itemize}
\item \(\exp \left( j \sqrt{\kappa_0} {\scriptstyle \Omega}_l t \right) = \exp \left( j \left( \sqrt{\kappa_0} {\scriptstyle \Omega}_m + 1 \right) t \right)\) and
\item \(\exp \left( j \sqrt{\kappa_0} {\scriptstyle \Omega}_m t \right) = \exp \left( j \left( \sqrt{\kappa_0} {\scriptstyle \Omega}_l - 1 \right) t \right)\),
\end{itemize}
corresponding to forcing terms in \eqref{eq:swing-dynamics-2nd-order-multiple-scale-Oeps1-two-modes-1} and \eqref{eq:swing-dynamics-2nd-order-multiple-scale-Oeps1-two-modes-2}:
\begin{small}
\begin{align}
& \left[ c_l'(\tau) - j \tfrac{\kappa_1}{2 \kappa_0} \kappa_0^{\frac{1}{2}} {\scriptstyle \Omega}_l c_l(\tau) - j\,\tfrac{1}{4} \kappa_0^{\text{-}\frac{1}{2}} {\scriptstyle \Omega}_l^{\text{-}1} \mathbf{v}_l^{\star} E_{ik} \mathbf{v}_m c_m(\tau) \right] \nonumber\\
& \times \text{-} 2 j {\scriptstyle \Omega}_l e^{ \left( j \sqrt{\kappa_0} {\scriptstyle \Omega}_l t \right)} \label{e_l c_l q:multiscale-analysis-case-1-term-1}\\
& \left[ c_m'(\tau) - j \tfrac{\kappa_1}{2 \kappa_0}  \kappa_0^{\frac{1}{2}} {\scriptstyle \Omega}_m c_m(\tau) - j\tfrac{1}{4}  \kappa_0^{\text{-}\frac{1}{2}} {\scriptstyle \Omega}_m^{\text{-}1} \mathbf{v}_m^{\star} E_{ik} \mathbf{v}_l c_l(\tau) \right] \nonumber\\
& \times \text{-} 2 j {\scriptstyle \Omega}_m e^{\left( j \sqrt{\kappa_0} {\scriptstyle \Omega}_m t \right)} \label{eq:multiscale-analysis-case-1-term-2}
\end{align}
\end{small}
The forcing terms corresponding to all other modes (\({\scriptstyle \Omega} \ne {\scriptstyle \Omega}_l, {\scriptstyle \Omega}_m\)) are of the form \(\left[ c'(\tau) - j \tfrac{\kappa_1}{2 \kappa_0}  \kappa_0^{\frac{1}{2}} {\scriptstyle \Omega} \mathbf{c}(\tau) \right] e^{\left( j \sqrt{\kappa_0} {\scriptstyle \Omega} t \right)}\), for which the analysis from Section \ref{sec:multiple-scale-analysis-stable-region} carries over.

We pick \(c_l(\tau)\) and \(c_m(\tau)\) such that both these terms vanish.
\begin{small}
\begin{align}
& c_l'(\tau) \,\text{-}\, j \tfrac{\kappa_1}{2 \kappa_0}  \kappa_0^{\frac{1}{2}} {\scriptstyle \Omega}_l c_l(\tau) \,\text{-}\, j \tfrac{1}{4}  \kappa_0^{\text{-}\frac{1}{2}} {\scriptstyle \Omega}_m^{\text{-}1} \mathbf{v}_l^{\star} E_{ik} \mathbf{v}_m c_m(\tau) = 0 \label{eq:multiscale-analysis-case-1-eq-1}\\
& c_m'(\tau) \,\text{-}\, j \tfrac{\kappa_1}{2 \kappa_0}  \kappa_0^{\frac{1}{2}} {\scriptstyle \Omega}_m c_m(\tau) \,\text{-}\, j \tfrac{1}{4}  \kappa_0^{\text{-}\frac{1}{2}} {\scriptstyle \Omega}_l^{\text{-}1} \mathbf{v}_m^{\star} E_{ik} \mathbf{v}_l c_l(\tau) = 0 \label{eq:multiscale-analysis-case-1-eq-2}
\end{align}
\end{small}
The complex conjugate terms combine similarly:
\begin{itemize}
\item \(\exp \left( - j \sqrt{\kappa_0} {\scriptstyle \Omega}_l t\right) = \exp \left(  j \left( - \sqrt{\kappa_0} {\scriptstyle \Omega}_m - 1 \right) t \right)\) and
\item \(\exp \left( - j \sqrt{\kappa_0} {\scriptstyle \Omega}_m t \right) =  \exp \left( j \left( - \sqrt{\kappa_0} {\scriptstyle \Omega}_l + 1 \right) t \right)\),
\end{itemize}
giving us complex conjugates of equations \eqref{eq:multiscale-analysis-case-1-eq-1} and \eqref{eq:multiscale-analysis-case-1-eq-2} but involving \(d_l(\tau)\) and \(d_m(\tau)\).

We can rewrite this system of equations as:
\begin{small}
\begin{align}
\begin{bmatrix} c_{l}'(\tau) \\ c_{m}'(\tau) \end{bmatrix} & = \frac{j}{2 \sqrt{\kappa_0}} \underbrace{\begin{bmatrix}
\kappa_1 {\scriptstyle \Omega}_l & \frac{\kappa_0 (\mathbf{v}_l^{\star} E_{ik} \mathbf{v}_m)}{2 {\scriptstyle \Omega}_l}  \\
\frac{\kappa_0 (\mathbf{v}_m^{\star} E_{ik} \mathbf{v}_l)}{2 {\scriptstyle \Omega}_m}  & \kappa_1 {\scriptstyle \Omega}_m \end{bmatrix}}_A 
\begin{bmatrix} c_{l}(\tau) \\ c_{m}(\tau) \end{bmatrix} \label{eq:multiscale-analysis-case-1-ode-system}\\
\text{or } \mathbf{c}'(\tau) & = \frac{j}{2 \sqrt{\kappa_0}} A\  \mathbf{c}(\tau) \nonumber
\end{align}
\end{small}
This choice of \(\mathbf{c}(\tau)\) makes \(\Phi_1(t, \tau)\) stable in \(t\). For bounded slow-time behavior of \(\Phi_0(t, \tau)\), we require \(\mathbf{c}(\tau)\) to be stable, or the eigenvalues of \(A\) to be real (because of the \(j\) prefactor). The characteristic polynomial of \(A\) is
\begin{align*}
& y^2 - \underbrace{\left( {\scriptstyle \Omega}_l + {\scriptstyle \Omega}_m \right) \kappa_1}_{\text{trace}(A)} y \\
& + \underbrace{\left( \kappa_1^2 {\scriptstyle \Omega}_l {\scriptstyle \Omega}_m - (\mathbf{v}_m^{\star} E_{ik} \mathbf{v}_l)^2 \frac{\kappa_0^2}{4 {\scriptstyle \Omega}_m {\scriptstyle \Omega}_l} \right)}_{\text{determinant}(A)} = 0.
\end{align*} 
Requiring its discriminant to be real, we get
\begin{align}
& \left( {\scriptstyle \Omega}_l + {\scriptstyle \Omega}_m \right)^2 \kappa_1^2 - 4 \kappa_1^2 {\scriptstyle \Omega}_1 {\scriptstyle \Omega}_m \\
& + (\mathbf{v}_m^{\star} E_{ik} \mathbf{v}_l)^2 \frac{\kappa_0^2}{{\scriptstyle \Omega}_m {\scriptstyle \Omega}_l} \ge 0 \nonumber\\
\implies & \left( {\scriptstyle \Omega}_l - {\scriptstyle \Omega}_m \right)^2 \kappa_1^2 + (\mathbf{v}_m^{\star} E_{ik} \mathbf{v}_l)^2 \frac{\kappa_0^2}{{\scriptstyle \Omega}_m {\scriptstyle \Omega}_l} \ge 0 \label{eq:multiscale-analysis-case-1-criterion}
\end{align}
This discriminant is always non-negative, implying that there is no instability at \((\kappa_0, \epsilon)\), or along any curve \(\kappa(\epsilon) = \kappa_0 + \kappa_1 \epsilon\).  Despite our initial assessment, the condition \(\sqrt{\kappa_0} \abs{{\scriptstyle \Omega}_l - {\scriptstyle \Omega}_m} = 1\) does not cause instability.  This is one of the sources of false positives in the regular perturbation analysis of section \ref{sec:regular-perturbation-analysis}.  An alternative derivation of this result is provided by the Krein-Gel'fand-Lidskii theorem~\cite{yakubovich1975periodic}, in terms of the stability properties of perturbed time-dependent Hamiltonian systems.

\bigskip
\paragraph{Case II: \(\sqrt{\kappa_0} \abs{{\scriptstyle \Omega}_l + {\scriptstyle \Omega}_m} = 1\)}

This condition applies when the sum of the square roots of eigenvalues of the Laplacian \(L\) is the original parametric forcing frequency \(\omega\), since \(\abs{{\scriptstyle \Omega}_l + {\scriptstyle \Omega}_m} = \frac{1}{\sqrt{\kappa_0}} = \omega\).

By a similar analysis, the relevant secular terms are obtained by combining the coefficients of
\begin{itemize}
\item \(\exp \left( j \sqrt{\kappa_0} {\scriptstyle \Omega}_l t \right) = \exp \left( j \left( - \sqrt{\kappa_0} {\scriptstyle \Omega}_m + 1 \right) t \right)\)
\item \(\exp \left( - j \sqrt{\kappa_0} {\scriptstyle \Omega}_m t \right) = \exp \left( j \left( \sqrt{\kappa_0} {\scriptstyle \Omega}_l - 1 \right) t \right)\),
\end{itemize}
which are
\begin{small}
\begin{align}
& \left[ c_l'(\tau) - \tfrac{j}{2} \tfrac{\kappa_1}{\kappa_0} \sqrt{\kappa_0} {\scriptstyle \Omega}_l c_l(\tau) - \tfrac{j}{4} \tfrac{1}{\sqrt{\kappa_0} {\scriptstyle \Omega}_m} \mathbf{v}_l^{\star} E_{ik} \mathbf{v}_m d_m(\tau) \right] \nonumber\\
& \times \text{-} 2 {\scriptstyle \Omega}_l j e^{\left( j \sqrt{\kappa_0} {\scriptstyle \Omega}_l t \right)} \label{eq:multiscale-analysis-case-2-term-1}\\
& \left[ d_m'(\tau) + \tfrac{j}{2} \tfrac{\kappa_1}{\kappa_0} \sqrt{\kappa_0} {\scriptstyle \Omega}_m d_m(\tau) + \tfrac{j}{4} \tfrac{1}{\sqrt{\kappa_0} {\scriptstyle \Omega}_m} \mathbf{v}_m^{\star} E_{ik} \mathbf{v}_l c_l(\tau) \right] \nonumber\\
& \times \text{-} 2 {\scriptstyle \Omega}_m j e^{\left( j \sqrt{\kappa_0} {\scriptstyle \Omega}_m t \right)} \label{eq:multiscale-analysis-case-2-term-2}
\end{align}
\end{small}
The complex conjugate terms combine similarly:
\begin{itemize}
\item \(\exp \left( - j \sqrt{\kappa_0} {\scriptstyle \Omega}_l t \right) = \exp \left( j \left( \sqrt{\kappa_0} {\scriptstyle \Omega}_m - 1 \right) t \right)\)
\item \(\exp \left( j \sqrt{\kappa_0} {\scriptstyle \Omega}_m t \right) = \exp \left( j \left( - \sqrt{\kappa_0} {\scriptstyle \Omega}_l + 1 \right) t \right)\),
\end{itemize}
giving us complex conjugates of terms \eqref{eq:multiscale-analysis-case-2-term-1} and \eqref{eq:multiscale-analysis-case-2-term-2}, but involving \(c_m(\tau)\) and \(d_l(\tau)\).

We pick \(c_l(\tau)\) and \(d_m(\tau)\) so that these terms vanish, which gives us the system of equations
\begin{align}
\begin{bmatrix} c_{l}'(\tau) \\ d_{m}'(\tau) \end{bmatrix} & = \frac{j}{2 \sqrt{\kappa_0}} \underbrace{\begin{bmatrix}
\kappa_1 {\scriptstyle \Omega}_l & \frac{\kappa_0 (\mathbf{v}_l^{\star} E_{ik} \mathbf{v}_m)}{2 {\scriptstyle \Omega}_l}   \\
- \frac{\kappa_0 (\mathbf{v}_m^{\star} E_{ik} \mathbf{v}_l)}{2 {\scriptstyle \Omega}_m}  & - \kappa_1 {\scriptstyle \Omega}_m \end{bmatrix}}_A 
\begin{bmatrix} c_l(\tau) \\ d_m(\tau) \end{bmatrix} \label{eq:multiscale-analysis-case-2-ode-system}\\
\text{or } \mathbf{c}'(\tau) & = \frac{j}{2 \sqrt{\kappa_0}} A\  \mathbf{c}(\tau). \nonumber
\end{align}
As before, this choice of \(\mathbf{c}(\tau)\) makes \(\Phi_1(t, \tau)\) stable in \(t\).  For bounded slow-time behavior of \(\Phi_0(t, \tau)\), we require \(\mathbf{c}(\tau)\) to be stable, or the eigenvalues of \(A\) to be real (because of the \(j\) prefactor).  For the discriminant of the characteristic polynomial of \(A\) to be non-negative, we require
\begin{align*}
& \left( {\scriptstyle \Omega}_l - {\scriptstyle \Omega}_m \right)^2 \kappa_1^2 + 4 {\scriptstyle \Omega}_l {\scriptstyle \Omega}_m \kappa_1^2 \\
& - \left( \mathbf{v}_m^{\star} E_{ik} \mathbf{v}_l \right)^2 \frac{\kappa_0^2}{{\scriptstyle \Omega}_l {\scriptstyle \Omega}_m} \ge 0 \\
\implies & \left( {\scriptstyle \Omega}_l + {\scriptstyle \Omega}_m \right)^2 \kappa_1^2 - \left( \mathbf{v}_m^{\star} E_{ik} \mathbf{v}_l \right)^2 \frac{\kappa_0^2}{{\scriptstyle \Omega}_l {\scriptstyle \Omega}_m} \ge 0 \\
\implies &  \kappa_1^2 \ge \left( \mathbf{v}_m^{\star} E_{ik} \mathbf{v}_l \right)^2 \frac{\kappa_0^2}{\left( {\scriptstyle \Omega}_l + {\scriptstyle \Omega}_m \right)^2 {\scriptstyle \Omega}_l {\scriptstyle \Omega}_m} 
\end{align*}
Since \({\scriptstyle \Omega}_l + {\scriptstyle \Omega}_m = \frac{1}{\sqrt{\kappa_0}}\), this stability condition can be simplified further, to give
\begin{equation}
\label{eq:multiscale-analysis-case-2-criterion}
\abs{\kappa_1} \ge \abs{\mathbf{v}_m^{\star} E_{ik} \mathbf{v}_l} \frac{\kappa_0^{3/2}}{\left( {\scriptstyle \Omega}_l {\scriptstyle \Omega}_m \right)^{1/2}}
\end{equation}
This suggests that there is a critical slope \(\kappa_1\) below which the solution at \((\kappa_0 + \kappa_1 \epsilon, \epsilon)\) is unstable.  In figure \ref{fig:testcurve_diagram}, this corresponds to one of the two test curves (red and cyan) originating at a point of instability, with the critical slope \(\kappa_1\) determining stability as \(\epsilon \to 0\).
\subsubsection{Interpreting the multiple-scale analysis stability criteria}
Our findings from the first-order multiple scale perturbation analysis are presented in table \ref{tab:multi-scale-findings-k-eps}. 

\begin{table}[htbp]
\centering
\begin{tabular}{lll}
\toprule
Condition on \(\kappa_0\) & Condition on \(\kappa_1\) & Stability\\
\midrule
\(\sqrt{\kappa_0} {\scriptstyle \Omega}_l \pm \sqrt{\kappa_0} {\scriptstyle \Omega}_m \ne \pm 1\) & - & Always stable\\
\(\sqrt{\kappa_0} {\scriptstyle \Omega}_l - \sqrt{\kappa_0} {\scriptstyle \Omega}_m = \pm 1\) & - & Always stable\\
\(\sqrt{\kappa_0} {\scriptstyle \Omega}_l + \sqrt{\kappa_0} {\scriptstyle \Omega}_m = 1\) & \(\abs{\kappa_1} \ge \frac{\kappa_0^{3/2} \abs{\mathbf{v}_m^{\star} E_{ik} \mathbf{v}_l}}{\left( {\scriptstyle \Omega}_l {\scriptstyle \Omega}_m \right)^{1/2}}\) & Stable\\
 & \(\abs{\kappa_1} < \frac{\kappa_0^{3/2} \abs{\mathbf{v}_m^{\star} E_{ik} \mathbf{v}_l}}{\left( \Omega_l \Omega_m \right)^{1/2}}\) & Unstable\\
\bottomrule
\end{tabular}
\caption{\label{tab:multi-scale-findings-k-eps}Conditions for stability of the system \eqref{eq:linearized-swing-dynamics-perturbed-one-edge-rescaled}, which represents linearized swing dynamics on a graph where one edge weight is varied periodically.  Here, \(\kappa\) lies on a test curve \(\kappa = \kappa_0 + \epsilon \kappa_1 + \mathcal{O}(\epsilon^2)\), and we impose conditions (derived in \cref{sec:vector-valued-multiple-scale-perturbation-analysis}) on \(\kappa_0\) and \(\kappa_1\) for stability at low forcing amplitudes \(\epsilon\) (\emph{i.e.} as \(\epsilon \to 0\)).  In this limit, the system is unstable at a discrete set of parameters \(\kappa_0\) that depends on the eigenvalues and eigenvectors of the graph Laplacian.}

\end{table}

Since \(\kappa_1\) represents the slope of the curve we are following in the \((\kappa, \epsilon)\) plane at \((0, \kappa_0)\), the critical value of \(\kappa_1\) corresponds to the tongue boundary (when the tongue exists).  We can reinterpret this in the \((\omega, \epsilon)\) plane via the scaling rule \(\kappa = 1 / \omega^2\).  The curve \(\kappa(\epsilon) = \kappa_0 + \epsilon \kappa_1 + \mathcal{O}(\epsilon^2)\) transforms to \(\omega(\epsilon) = \omega_0 + a \epsilon + \mathcal{O}(\epsilon^2)\), where \(\kappa_0 = \frac{1}{\omega_0^2}\) and \( a \) is \(\mathrm{d} \omega(\epsilon) / \mathrm{d} \epsilon \vert_{\epsilon = 0}\).  Then
\begin{align}
\kappa_1 & := \left. \frac{\mathrm{d} \kappa(\epsilon)}{\mathrm{d} \epsilon} \middle|_{\epsilon=0} \right. \nonumber\\
& = \left. \frac{\mathrm{d} }{\mathrm{d} \epsilon} \frac{1}{\omega(\epsilon)^2} \middle|_{\epsilon=0} \right. \nonumber\\
& = - \left[ \frac{2}{\omega(\epsilon)^3 } \frac{\mathrm{d} \omega(\epsilon)}{\mathrm{d} \epsilon} \right]_{\epsilon = 0} = - \frac{2}{\omega_0^3} \left. \frac{\mathrm{d} \omega(\epsilon)}{\mathrm{d} \epsilon} \middle|_{\epsilon = 0} \right. \nonumber\\
& = - \frac{2}{\omega_0^3} a \label{eq:k-to-omega-transformation-1}
\end{align}
The stability condition \eqref{eq:multiscale-analysis-case-2-criterion} on \(\kappa_1\) can then be expressed in terms of \(\omega\), since \(\kappa_0 = 1 / \omega_0^2\):
\begin{align}
\abs{\kappa_1} & \ge \abs{\mathbf{v}_m^{\star} E_{ik} \mathbf{v}_l} \frac{\kappa_0^{3/2}}{\left( {\scriptstyle \Omega}_l {\scriptstyle \Omega}_m \right)^{1/2}} \nonumber\\
\implies \abs{- \frac{2}{\omega_0^3} a} & \ge \abs{\mathbf{v}_m^{\star} E_{ik} \mathbf{v}_l} \frac{1}{\omega_0^3 \left( {\scriptstyle \Omega}_l {\scriptstyle \Omega}_m \right)^{1/2}} \nonumber\\
\implies \abs{a} & \ge \frac{\abs{\mathbf{v}_m^{\star} E_{ik} \mathbf{v}_l}}{2 \left( {\scriptstyle \Omega}_l {\scriptstyle \Omega}_m \right)^{1/2}} \label{eq:k-to-omega-transformation-2}
\end{align}
From \eqref{eq:k-to-omega-transformation-1} and \eqref{eq:k-to-omega-transformation-2}, we can state the \emph{instability} criteria for solutions of the system \eqref{eq:linearized-swing-dynamics-perturbed-one-edge} for \(\epsilon \to 0\) at a point on the curve \(\omega(\epsilon) = \omega_0 + a \epsilon + \mathcal{O}(\epsilon^2)\) as
\begin{align}
& \omega_0 = {\scriptstyle \Omega}_l + {\scriptstyle \Omega}_m,\ l, m \in 1, 2, \dots, n \nonumber\\
& \abs{a} < \frac{\abs{\mathbf{v}_m^{\star} E_{ik} \mathbf{v}_l}}{2 \left( {\scriptstyle \Omega}_l {\scriptstyle \Omega}_m \right)^{1/2}} \label{eq:multiscale-analysis-full-criteria-appendix}
\end{align}
\subsection{The multiple scale method with eigenvalue multiplicity}
\label{sec:multi-scale-method-with-multiplicity}

The multiscale analysis of appendix~\ref{sec:vector-valued-multiple-scale-perturbation-analysis} assumes that all eigenvalues of the graph Laplacian have multiplicity \(1\).  This is not true for graphs with symmetries, such as the ring networks considered in Section~\ref{sec:ring-network-susceptibility-analysis}.  The vector-valued multi-scale analysis is still valid, but now Equations~\eqref{eq:swing-dynamics-2nd-order-multiple-scale-Oeps1-two-modes-1} and \eqref{eq:swing-dynamics-2nd-order-multiple-scale-Oeps1-two-modes-2} apply to any pair of eigenvectors \(\mathbf{v}_m\) and \(\mathbf{v}_l\) of the Laplacian \(L\) corresponding to the eigenvalues \({\scriptstyle \Omega}_m^2\) and \({\scriptstyle \Omega}_l^2\) respectively.  Let \(V_m\) and \(V_l\) be matrices whose columns form an orthonormal basis for the eigenspaces corresponding to \({\scriptstyle \Omega}_m^2\) and \({\scriptstyle \Omega}_l^2\) respectively.  Then we have
\begin{align*}
\mathbf{v}_m = V_m \mathbf{z}_m, \quad \mathbf{v}_l = V_l \mathbf{z}_l
\end{align*}
for appropriately sized vectors \(\mathbf{z}_m\) and \(\mathbf{z}_l\).  Further, \(\left\|\mathbf{z}_m\right\| = 1\) since \(\left\|\mathbf{v}_m\right\| = 1\) and \(V_m\) is unitary (similarly \(\mathbf{z}_l\)).
Consequently the instability criterion~\eqref{eq:multiscale-analysis-full-criteria} is modified to
\begin{align}
& \omega_0 = {\scriptstyle \Omega}_l + {\scriptstyle \Omega}_m,\ l, m \in 1, 2, \dots, n \nonumber\\
& \abs{\omega'(0)} < \max_{\left\|\mathbf{z}_m\right\| \text{=} \left\| \mathbf{z}_l \right\| \text{=} 1}\frac{\abs{\mathbf{z}_m^{\star} \left( V_m^{\star} E_{ik} V_l \right) \mathbf{z}_l}}{2 \left( {\scriptstyle \Omega}_l {\scriptstyle \Omega}_m \right)^{1/2}}. \label{eq:multiscale-analysis-full-criteria-with-multiplicity}
\end{align}
This maximum is attained when \(\mathbf{z}_m\) and \(\mathbf{z}_l\) are the first left and right singular vectors of \(V_m^{\star} E_{ik} V_l\) respectively, and the maximum is the \(2\)-induced norm of this matrix:
\begin{align}
\label{eq:multiscale-analysis-full-criteria-with-multiplicity-simplified}
\abs{\omega'(0)} < \frac{\left\|V_m^{\star} E_{ik} V_l\right\|_2}{2 \left( {\scriptstyle \Omega}_l {\scriptstyle \Omega}_m \right)^{1/2}}
\end{align}
\subsection{Multiple scale perturbation analysis: unstable modes}
\label{sec:multi-scale-method-unstable-modes}

In this section we verify that the modes of the unstable response at a critical frequency \(\omega_{lm} = \abs{{\scriptstyle \Omega}_l + {\scriptstyle \Omega}_m}\) (\cref{eq:multiscale-analysis-full-criteria-appendix}) are the corresponding eigenvectors \(\mathbf{v}_l\) and \(\mathbf{v}_m\) of the graph Laplacian.  From \cref{eq:multiple-scale-sol-Oeps1-spectral}, the \(\mathcal{O}(1)\) term in the expansion of the response is given by
\begin{align*}
\Phi_0(t,\tau) = V \left( e^{j \sqrt{\kappa_0 } \Omega t} \mathbf{c}(\tau) +  e^{\text{-}j \sqrt{\kappa_0 }\Omega t} \mathbf{d}(\tau) \right).
\end{align*}
At a critical frequency \(\omega_{lm}\), \(\mathbf{c}(\tau)\) and \(\mathbf{d}(\tau)\) satisfy (from \cref{eq:secular-term-elimination-stable-region-ODE-1} and \cref{eq:multiscale-analysis-case-2-ode-system})
\begin{align*}
\begin{bmatrix} c_k(\tau) \\ d_k(\tau) \end{bmatrix} & = 
   \exp \left( j \frac{\kappa_1 }{2 \kappa_0^{1/2}}\begin{pmatrix}
1 & 0 \\
0 & \text{-}1
 \end{pmatrix} \tau \right) \begin{bmatrix} c_{k}(0) \\ d_{k}(0) \end{bmatrix} &  k \ne l,m \\
 \begin{bmatrix} c_l(\tau) \\ d_m(\tau) \end{bmatrix} & = \exp \left( j \frac{1}{2 \sqrt{\kappa_0 }} A \tau \right) \begin{bmatrix} c_{l}(0) \\ d_{m}(0) \end{bmatrix} \\
 \begin{bmatrix} c_m(\tau) \\ d_l(\tau) \end{bmatrix} & = \exp \left( \text{-}j \frac{1}{2 \sqrt{\kappa_0 }} A \tau \right) \begin{bmatrix} c_{m}(0) \\ d_{l}(0) \end{bmatrix}
\end{align*}
where \(A\) is as in \cref{eq:multiscale-analysis-case-2-ode-system} and has eigenvalues with non-zero imaginary part.  \(c_l\), \(c_m\), \(d_l\) and \(d_m\) are the only components of \(\mathbf{c}\) and \(\mathbf{d}\) that grow unboundedly.  \(\Phi_0 (t,\tau)\) is
\begin{align*}
\Phi_0 (t,\tau) & = \mathbf{v}_l (e^{j \sqrt{\kappa_0 } \Omega_l t} c_l(\tau) + e^{\text{-}j \sqrt{\kappa_0 } \Omega_l t} d_l(\tau)) \\
& + \mathbf{v}_m (e^{j \sqrt{\kappa_0 } \Omega_m t} c_m(\tau) + e^{\text{-}j \sqrt{\kappa_0 } \Omega_m t} d_m(\tau)) \\
& + \text{stable modes.}
\end{align*}
This choice of \(\mathbf{c}\) and \(\mathbf{d}\) makes the secular terms in the \(\mathcal{O}(\epsilon)\) dynamics (\cref{eq:swing-dynamics-2nd-order-multiple-scale-Oeps1-eigenbasis}) vanish, so there are no other unstable modes to \(\mathcal{O}(\epsilon)\).  The unstable response at \(\omega_{lm}\) as predicted by the multiple scale analysis thus involves only the modes \(\mathbf{v}_l\) and \(\mathbf{v}_m\).
\subsection{Eigenvector test for controllability}
\label{sec:eigenvector-test-for-controllability}

We obtain a controllability criterion for the second order system~\eqref{eq:controllability-second-order-illustration} by applying the eigenvector test~\cite{hespanha2018} to it rewritten in first order form.

\begin{lemma}[Eigenvector test for harmonic oscillator networks]
Let \(L \in \mathbb{R}^{n\times n}\) be real-symmetric with \(L \mathbf{v}_m = {\scriptstyle \Omega}_m^2 \mathbf{v}_m\), \({\scriptstyle \Omega}_m \ne 0\).  Let \(\mathcal{B} \in \mathbb{C}^{n \times k}\).  The harmonic oscillator system
\begin{align}
\ddot{\phi} = -L \phi + \mathcal{B} u(t) \label{eq:pbh-test-second-order-system}
\end{align}
is not controllable from the input \(u(t)\) if \(\mathbf{v}_m^{\star} \mathcal{B} = 0\) for some \(m\).
\label{lemma:eigenvector-test-second-order}
\end{lemma}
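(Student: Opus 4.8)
The plan is to put the second-order system into first-order state-space form and then invoke the Popov-Belevitch-Hautus (PBH) eigenvector test~\cite{hespanha2018}. Introducing the state $x := \begin{bmatrix} \phi \\ \dot\phi \end{bmatrix}$, system~\eqref{eq:pbh-test-second-order-system} becomes $\dot{x} = \mathcal{L}\, x + \mathcal{B}_{\mathrm{a}}\, u$, where $\mathcal{L} = \begin{bmatrix} 0 & I \\ \text{-}L & 0 \end{bmatrix}$ is precisely the Hamiltonian generator of \cref{sec:laplacian-properties} and $\mathcal{B}_{\mathrm{a}} := \begin{bmatrix} 0 \\ \mathcal{B} \end{bmatrix}$. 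The PBH test states that the pair $(\mathcal{L}, \mathcal{B}_{\mathrm{a}})$ is uncontrollable whenever there exists a left eigenvector $\ell$ of $\mathcal{L}$ with $\ell^{\star} \mathcal{B}_{\mathrm{a}} = 0$. Since the lemma asserts only a \emph{sufficient} condition for uncontrollability, it will be enough to exhibit one such left eigenvector under the hypothesis $\mathbf{v}_m^{\star}\mathcal{B} = 0$.

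First I would compute the left eigenvectors of $\mathcal{L}$ attached to the nonzero modes. Writing a candidate as the row $\begin{bmatrix} a^{\star} & b^{\star} \end{bmatrix}$ and imposing $\begin{bmatrix} a^{\star} & b^{\star} \end{bmatrix}\mathcal{L} = \lambda \begin{bmatrix} a^{\star} & b^{\star} \end{bmatrix}$ yields the two relations $a^{\star} = \lambda\, b^{\star}$ and $\text{-}b^{\star} L = \lambda\, a^{\star}$. Eliminating $a$ and using $L^{\star} = L$ gives $L b = \text{-}\bar\lambda^2 b$, which for $\lambda = j{\scriptstyle \Omega}_m$ reads $L b = {\scriptstyle \Omega}_m^2 b$, so $b = \mathbf{v}_m$. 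Normalizing the resulting row then produces the left eigenvector $\ell^{\star} = \begin{bmatrix} \mathbf{v}_m^{\star} & \text{-}j {\scriptstyle \Omega}_m^{\text{-}1}\mathbf{v}_m^{\star} \end{bmatrix}$, in agreement with the row already used in~\eqref{eq:controllability-analysis-criterion}; alternatively it can be read off directly from the right-eigenvector relation~\eqref{eq:graph-laplacian-as-generator} together with the symmetry of $L$.

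The final step is to evaluate the pairing $\ell^{\star} \mathcal{B}_{\mathrm{a}} = \begin{bmatrix} \mathbf{v}_m^{\star} & \text{-}j{\scriptstyle \Omega}_m^{\text{-}1}\mathbf{v}_m^{\star} \end{bmatrix}\begin{bmatrix} 0 \\ \mathcal{B}\end{bmatrix} = \text{-}j{\scriptstyle \Omega}_m^{\text{-}1}\,\mathbf{v}_m^{\star} \mathcal{B}$. Because ${\scriptstyle \Omega}_m \ne 0$, this quantity vanishes exactly when $\mathbf{v}_m^{\star}\mathcal{B} = 0$. Hence, under the stated hypothesis, the left eigenvector $\ell$ of $\mathcal{L}$ for the eigenvalue $j{\scriptstyle \Omega}_m$ is orthogonal to the range of $\mathcal{B}_{\mathrm{a}}$, so the PBH test is violated and the system fails to be controllable from $u$, as claimed.

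I expect the only delicate point to be the bookkeeping for the \emph{left} (rather than right) eigenvectors of the non-normal matrix $\mathcal{L}$, and pinning down the correct normalization factor $\text{-}j{\scriptstyle \Omega}_m^{\text{-}1}$; everything else is a direct substitution. The hypothesis ${\scriptstyle \Omega}_m \ne 0$ is exactly what allows the scalar to be divided out, and is also why the zero-eigenvalue mode of $L$---which corresponds to a size-$2$ Jordan block of $\mathcal{L}$ (\cref{sec:laplacian-properties})---does not need separate treatment here.
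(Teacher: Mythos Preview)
Your proposal is correct and follows essentially the same route as the paper: rewrite the second-order system in first-order form with generator $\mathcal{L}$ and input matrix $\begin{bmatrix} 0 \\ \mathcal{B}\end{bmatrix}$, identify the left eigenvector $\begin{bmatrix} \mathbf{v}_m^{\star} & \text{-}j{\scriptstyle \Omega}_m^{\text{-}1}\mathbf{v}_m^{\star}\end{bmatrix}$ of $\mathcal{L}$ at $j{\scriptstyle \Omega}_m$, and apply the PBH eigenvector test to conclude uncontrollability when $\mathbf{v}_m^{\star}\mathcal{B}=0$. The only cosmetic difference is that you derive the left eigenvector by solving $\begin{bmatrix} a^{\star} & b^{\star}\end{bmatrix}\mathcal{L}=\lambda\begin{bmatrix} a^{\star} & b^{\star}\end{bmatrix}$ directly, whereas the paper quotes the right eigenvectors from~\eqref{eq:graph-laplacian-as-generator} and reads off the left ones; the resulting row vector and the final substitution are identical.
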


\begin{proof}
Rewritten in first order form, \cref{eq:pbh-test-second-order-system} is
\begin{align}
\frac{\mathrm{d} }{\mathrm{d} t} \begin{bmatrix} \phi \\ \dot{\phi} \end{bmatrix} = \underbrace{\begin{bmatrix}
0 & I \\
\text{-} L & 0
 \end{bmatrix}}_{\mathcal{L}} \begin{bmatrix} \phi \\ \dot{\phi} \end{bmatrix} + \begin{bmatrix} 0 \\ \mathcal{B} \end{bmatrix} u(t) \label{eq:pbh-test-first-order-system}
\end{align}
For \({\scriptstyle \Omega}_m \ne 0\), the eigenvalues and eigenvectors of the generator \(\mathcal{L}\) are given by the relation
\begin{align*}
\mathcal{L} \begin{pmatrix} \mathbf{v}_{m} \\ \pm j {\scriptstyle \Omega}_{m} \mathbf{v}_m \end{pmatrix} = \pm j {\scriptstyle \Omega}_m \begin{pmatrix} \mathbf{v}_{m} \\ \pm j {\scriptstyle \Omega}_{m} \mathbf{v}_m \end{pmatrix},
\end{align*}
which can be verified by direct computation.

Applying the eigenvector test to \cref{eq:pbh-test-first-order-system}, a left eigenvector of \(\mathcal{L}\) is in the left nullspace of \(\begin{bmatrix} 0 & \mathcal{B} \end{bmatrix}^{\mathrm{T}}\) iff
\begin{align*}
\begin{bmatrix} \mathbf{v}_m^{\star} & \mp j {\scriptstyle \Omega}_m^{\text{-}\star} \mathbf{v}_m^{\star} \end{bmatrix} \begin{bmatrix} 0 \\ \mathcal{B} \end{bmatrix} = 0 \\
\implies \mathbf{v}_m^{\star} \mathcal{B} = 0
\end{align*}
\end{proof}

Applying \cref{lemma:eigenvector-test-second-order} to system~\eqref{eq:controllability-second-order-illustration}, it is uncontrollable from the input \(u_l(t) = e^{j (\Omega_l + \omega)t} c_l\) iff
\begin{align}
  \begin{bmatrix} \mathbf{v}_{m}^{\star}  & \text{-} j {\scriptstyle \Omega}_m^{\text{-}1} \mathbf{v}_{m}^{\star} \end{bmatrix} \underbrace{\begin{bmatrix} 0 & 0 \\ E_{ik} & 0 \end{bmatrix} \begin{bmatrix} \mathbf{v}_{l} \\ j {\scriptstyle \Omega}_l \mathbf{v}_{l} \end{bmatrix}}_{\mathcal{B}} & = 0 \nonumber\\ \implies \mathbf{v}_m^{\star} E_{ik} \mathbf{v}_l & = 0. 
\end{align}
For \({\scriptstyle \Omega}_m = 0\), \(\mathcal{L}\) has a Jordan block of size \(2 k\), where \(L\) has \(k\) zero eigenvalues, and no conclusion can be drawn for general \(\mathcal{B}\).  But when \(L\) and \(E_{ik}\) are symmetric graph Laplacians, as in \cref{eq:controllability-second-order-illustration}, the mode corresponding to the zero eigenvalue of \(L\) is \(\mathbf{v}_m = \mathbbm{1}\).  This mode is not controllable from \(u_l(t)\) either, as evident from the PBH test~\cite{hespanha2018}:
\begin{align*}
& \mathrm{rank} \left[ \begin{array}{c:c}
  \lambda I - \mathcal{L} & \begin{matrix} 0 \\ E_{ik} \mathbf{v}_l \end{matrix} \end{array}
 \right] \\
= & \mathrm{rank} \left[ \begin{array}{cc:c}
\lambda I & \text{-} I & 0\\
L & \lambda I & E_{ik} \mathbf{v}_l
 \end{array} \right] \stackrel{?}{<} n
\end{align*}
When \(\lambda = 0\), this matrix drops rank, since \(\mathbbm{1}^{\star} E_{ik} = 0\), and
\begin{align*}
\begin{bmatrix} 0 & \mathbbm{1}^{\star} \end{bmatrix}\left[ \begin{array}{cc:c}
0 & \text{-} I & 0\\
L & 0 & E_{ik} \mathbf{v}_l
 \end{array} \right] = 0
\end{align*} 
\end{document}